\newenvironment{packed_enum}{
\begin{description}
	\setlength{\topsep}{0pt}
	\setlength{\partopsep}{0pt}
  \setlength{\itemsep}{0pt}
  \setlength{\parskip}{0pt}
  \setlength{\parsep}{0pt}
}{\end{description}}
\newcommand{\Order}{\mathcal{O}}
\newcommand{\TOrder}{\tilde{\mathcal{O}}}
\def\ETFS{\textbf{ETFS}}
\def\AETFS{\textbf{AETFS}}
\def\ETFSDP{\textsc{ETFS-DP}}
\def\dd{\mathinner{.\,.}}
\newtheorem{problem2}{Problem}{\bfseries}{\itshape}
\title{String Sanitization Under Edit Distance: Improved and Generalized} 
\titlerunning{String Sanitization Under Edit Distance: Improved and Generalized} 
\author{Takuya Mieno}
{Kyushu University, Japan \and Japan Society for the Promotion of Science, Japan}
{takuya.mieno@inf.kyushu-u.ac.jp}
{https://orcid.org/0000-0003-2922-9434}
{JSPS KAKENHI Grant Number JP20J11983}
\author{Solon P. Pissis}{CWI, Amsterdam, The Netherlands \and Vrije Universiteit, Amsterdam, The Netherlands}{solon.pissis@cwi.nl}{https://orcid.org/0000-0002-1445-1932}{}
\author{Leen Stougie}
{CWI, Amsterdam, The Netherlands \and Vrije Universiteit, Amsterdam, The Netherlands}
{leen.stougie@cwi.nl}
{}
{Netherlands Organisation for Scientific Research (NWO) through Gravitation-grant NETWORKS-024.002.003}
\author{Michelle Sweering}
{CWI, Amsterdam, The Netherlands}
{michelle.sweering@cwi.nl}
{}
{Netherlands Organisation for Scientific Research (NWO) through Gravitation-grant NETWORKS-024.002.003}
\authorrunning{T.~Mieno et al.} 
\keywords{string algorithms, data sanitization, edit distance, dynamic programming}
\begin{document}

\maketitle

\begin{abstract}
Let $W$ be a string of length $n$ over an alphabet $\Sigma$, $k$ be a positive integer, and $\mathcal{S}$ be a set of length-$k$ substrings of $W$. The {\ETFS} problem (\textbf{E}dit distance, \textbf{T}otal order, \textbf{F}requency, \textbf{S}anitization) asks us to construct a string  $X_{\mathrm{ED}}$ such that: (i) no string of $\mathcal{S}$ occurs in $X_{\mathrm{ED}}$; (ii) the order of all other length-$k$ substrings over $\Sigma$ (and thus the frequency) is the same in $W$ and in $X_{\mathrm{ED}}$; and (iii) $X_{\mathrm{ED}}$ has minimal edit distance to $W$. When $W$ represents an individual's data and $\mathcal{S}$ represents a set of confidential patterns, the {\ETFS} problem asks for transforming $W$ to preserve its privacy and its utility [Bernardini et al., ECML PKDD 2019].

{\ETFS} can be solved in $\Order(n^2k)$ time [Bernardini et al., CPM 2020]. The same paper shows that {\ETFS} cannot be solved in $\Order(n^{2-\delta})$ time, for any $\delta>0$, unless the Strong Exponential Time Hypothesis (SETH) is false. Our main results can be summarized as follows:
\begin{itemize}
    \item An $\Order(n^2\log^2k)$-time algorithm to solve {\ETFS}.
    \item An $\Order(n^2\log^2n)$-time algorithm to solve {\AETFS} (\textbf{A}rbitrary lengths, \textbf{E}dit distance, \textbf{T}otal order, \textbf{F}requency, \textbf{S}anitization), a generalization of {\ETFS} in which the elements of $\mathcal{S}$ can have arbitrary lengths.
\end{itemize}
Our algorithms are thus optimal up to subpolynomial factors, unless SETH fails. 

In order to arrive at these results, we develop new techniques for computing a variant of the standard dynamic programming (DP) table for edit distance. In particular, we simulate the DP table computation using a directed acyclic graph in which every node is assigned to a smaller DP table. We then focus on redundancy in these DP tables and exploit a tabulation technique according to dyadic intervals to obtain an optimal alignment in $\TOrder(n^2)$ total time\footnote{The notation $\TOrder(f)$ denotes $\Order(f\cdot \text{polylog}(f))$.}. 
Beyond string sanitization, our techniques may inspire solutions to other problems related to regular expressions or context-free grammars.
\end{abstract}

\section{Introduction}\label{sec:intro}

Let us start with an example to ensure that the reader is familiar with the basic motivation behind the computational problem investigated here. Consider a sequence $W$ of items representing a user's on-line purchasing history. Further consider a fragment (or a subsequence) of $W$ denoting that the user has first purchased unscented lotions and zinc/magnesium supplements and then unscented soaps and cotton balls in extra-large bags. By having access to $W$ and to the respective domain knowledge, one can infer that the user is probably pregnant and close to the delivery date. 

\emph{Data sanitization}, also known as \emph{knowledge hiding}, is a privacy-preserving data mining process aiming to prevent the mining of confidential knowledge from published datasets; it has been an active area of research for the past 25 years~\cite{CM96,DBLP:conf/icdm/OliveiraZ03,DBLP:journals/tkde/VerykiosEBSD04,DBLP:conf/cikm/Gkoulalas-DivanisV06,DBLP:journals/tkde/WuCC07,DBLP:journals/tkde/Gkoulalas-DivanisV09,DBLP:journals/tkde/AbulBG10,DBLP:journals/dke/AbulG12,DBLP:conf/kdd/Gkoulalas-DivanisL11,DBLP:conf/icdm/GwaderaGL13,DBLP:conf/sdm/LoukidesG15,DBLP:conf/wsdm/BonomiFJ16,DBLP:conf/pkdd/0001CCGLPPR19,TKDD,DBLP:conf/icdm/0001CGGLPPPSS20}.
Informally, it is the process of disguising (hiding) confidential information in a given dataset. This process typically incurs some data utility loss that should be minimized. Thus, naturally, privacy constraints and utility objective functions lead to the formulation of combinatorial optimization problems. From a fundamental perspective, it is thus relevant to be able to establish some formal guarantees.

A string $W$ is a sequence of letters over some alphabet $\Sigma$. 
Individuals' data, in domains ranging from web analytics to transportation and bioinformatics, are typically represented by strings. For example, when $\Sigma$ is a set of items, 
$W$ can represent a user's purchasing history~\cite{DBLP:conf/icde/AgrawalS95}; 
when $\Sigma$ is a set of locations, $W$ can represent a
user's location profile~\cite{DBLP:conf/gis/YingLWT11}; 
and when $\Sigma$ is the DNA alphabet, $W$ can represent
a patient's genome sequence~\cite{NGS}. Such strings commonly fuel up
a gamut of applications; in particular, frequent pattern mining applications~\cite{DBLP:books/sp/fpm14/Aggarwal14a}. For example, frequent pattern mining from location history data facilitates route planning~\cite{DBLP:journals/is/ChenYL15}; frequent pattern mining from market-basket data facilitates business decision making~\cite{DBLP:conf/icde/AgrawalS95}; frequent pattern mining from genome sequences facilitates clinical diagnostics~\cite{NGS}. 
To support these applications in a privacy-preserving manner, individual sequences are often being disseminated after they have been sanitized.

Towards this end, Bernardini et al.~have recently formalized the following string sanitization problem under edit distance~\cite{TKDD}.
Let $W$ be a string of length $n$ over an alphabet $\Sigma$, $k$ be a positive integer, and $\mathcal{S}$ be a set of length-$k$ substrings of $W$. Set $\mathcal{S}$ is conceptually seen as an {\em antidictionary}: a set of {\em sensitive patterns} modelling private or confidential information. The {\ETFS} problem (\textbf{E}dit distance, \textbf{T}otal order, \textbf{F}requency, \textbf{S}anitization) asks us to construct a string  $X_{\mathrm{ED}}$ such that: (i) no string of $\mathcal{S}$ occurs in $X_{\mathrm{ED}}$; (ii) the order of all other length-$k$ substrings over $\Sigma$ is the same in $W$ and in $X_{\mathrm{ED}}$; and (iii) $X_{\mathrm{ED}}$ has minimal edit distance to $W$. In order to obtain a feasible solution string, we may need to extend $\Sigma$ to $\Sigma_{\#}=\Sigma\cup\{\#\} $, which includes a special letter $\#\notin \Sigma$.

\begin{example} \label{ex:ETFS}
Let $W=\texttt{ecabaaaaabbbadf}$ over alphabet $\Sigma=\{\texttt{a},\texttt{b},\texttt{c},\texttt{d},\texttt{e},\texttt{f}\}$ be the input string. Further let $k = 3$ and the set of sensitive patterns be $\mathcal{S}=\{\texttt{aba},\texttt{baa},\texttt{aaa},\texttt{aab},\texttt{bba}\}$. Consider the following three feasible (sanitized) strings: $X_{\mathrm{TR}} = \texttt{eca\#cab\#abb\#bbb\#bad\#adf}$, $X_{\mathrm{MIN}} = \texttt{ecabbb\#badf}$ and $X_{\mathrm{ED}} = \texttt{ecab\#aa\#abbb\#badf}$. All three strings contain {\em no sensitive pattern} and preserve the {\em total order} and thus the {\em frequency} of all non-sensitive length-$3$ patterns of $W$: $X_{\mathrm{TR}}$ is the trivial solution of interleaving the non-sensitive length-$3$ patterns of $W$ with $\#$;  $X_{\mathrm{MIN}}$ is the {\em shortest} possible such string; and $X_{\mathrm{ED}}$ is a string {\em closest} to $W$ in terms of edit distance. 
\end{example}

A simple $\Order(n^2k|\Sigma|)$-time solution~\cite{TKDD} to {\ETFS} can be obtained via employing approximate regular expression matching. Consider the regular expression $R$ that encodes all feasible solution strings. The size of $R$ is $\Order(nk|\Sigma|)$. By aligning $W$ and $R$ using the standard quadratic-time algorithm~\cite{regex}, we obtain an optimal solution $X_{\mathrm{ED}}$ in $\Order(n^2k|\Sigma|)$ time for {\ETFS}. Bernardini et al.~showed that this can be improved to $\Order(n^2k)$ time~\cite{bernardini_et_al:LIPIcs:2020:12132}. Let us informally describe their algorithm. (A formal description of their algorithm follows in Section~\ref{sec:bg}.) We use a dynamic programming (DP) table similar to the standard edit distance algorithm. We write the letters of the input string $W$ on the top of the first row. Since we do not know the exact form of the output string $X_{\mathrm{ED}}$, we write the non-sensitive length-$k$ patterns to the left of the first column interleaved by special $\#$ letters. We then proceed to fill this table using recursive formulae. The formulae are more involved than the edit distance ones to account for the possibility to {\em merge} consecutive non-sensitive patterns (e.g., \texttt{eca} and \texttt{cab} are merged to \texttt{ecab} in Example~\ref{ex:ETFS}) and to expand the $\#$'s into longer gadgets that may contain up to $k-1$ letters from $\Sigma$ (e.g., \texttt{\#aa\#} in Example~\ref{ex:ETFS}). Once the DP table is filled, we construct an $X_{\mathrm{ED}}$ by tracing back an optimal alignment.

Bernardini et al.~also showed, via a reduction from the weighted edit distance problem~\cite{DBLP:conf/focs/BringmannK15}, that {\ETFS} cannot be solved in $\Order(n^{2-\delta})$ time, for any $\delta>0$, unless the strong exponential time hypothesis (SETH)~\cite{DBLP:journals/jcss/ImpagliazzoP01,DBLP:journals/jcss/ImpagliazzoPZ01} is false. We were thus also motivated to match this lower bound.

\paragraph*{Our Results and Techniques} Our first main result is the following.

\begin{restatable}{theorem}{etfsatheorem}\label{the:etfs}
 The {\ETFS} problem can be solved in $\Order(n^2\log^2k)$ time.
\end{restatable}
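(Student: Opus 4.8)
The plan is to begin from the $\Order(n^2k)$ dynamic program of Bernardini et al.\ and to isolate precisely where the factor $k$ is spent, then to eliminate it by reorganizing the computation rather than by changing the set of feasible solutions. In that DP the non-sensitive length-$k$ patterns, interleaved with $\#$, are written down the rows and $W$ across the $n$ columns. First I would classify the cells: most obey the ordinary edit-distance recurrence and look back only $\Order(1)$ positions, but two families of cells take a minimum over a window of $\Theta(k)$ previously computed values. These are exactly the cells that (i) \emph{merge} two consecutive non-sensitive patterns, which overlap in $k-1$ letters, and (ii) \emph{expand} a $\#$ into a gadget $\#u\#$ with $u\in\Sigma^{\le k-1}$, where the optimal $u$ amounts to aligning a window of $W$ of length at most $k-1$. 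Since the running time is dominated by these windowed minimizations, the task reduces to answering all of them in $\TOrder(n^2)$ total time.

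Next I would recast the table as a shortest-path computation on a DAG in which every node carries a \emph{smaller} edit-distance subtable, namely the alignment of a single solid block (a maximal chain of merged patterns, a string of length $\Order(k)$ over $\Sigma$) against $W$; the edges encode the admissible transitions between blocks through gadgets, and the expensive transitions are the windowed minimizations above. There are only $\Order(n)$ blocks, so per column the real obstacle is to summarize each block's $k$ rows in $\TOrder(1)$ amortized time instead of $\Order(k)$. The lever is \emph{redundancy}: consecutive patterns share a length-$(k-1)$ factor, so the subtables of neighboring nodes agree up to a shift, and the alignments being recomputed across overlapping windows are, up to that shift, the same alignment of $W$ against a common substring.

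To cash in this redundancy I would tabulate according to \emph{dyadic intervals} of the block dimension, that is, to the nodes of a segment tree over the blocks, so that at level $t$ only $\Order(n/2^t)$ aligned intervals arise and any window is covered by $\Order(\log k)$ of them. For each such interval I would precompute once the boundary-to-boundary transfer of its alignment against $W$ and share it among all blocks that contain the interval, which keeps the precomputation inside the $\TOrder(n^2)$ budget. Because edit-distance transfer matrices are \emph{Monge}---and remain Monge under arbitrary substitution, insertion and deletion weights, so no generality is lost---adjacent transfers compose by a $(\min,+)$-product of totally monotone matrices, for which SMAWK-type row-minima queries are available. Assembling a block's contribution to a column then means combining the $\Order(\log k)$ dyadic transfers covering the active window; I expect each combination to cost $\Order(\log k)$, for $\Order(\log^2 k)$ per block per column, which over $\Order(n)$ blocks and $n$ columns yields the claimed $\Order(n^2\log^2 k)$.

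The step I expect to be hardest is verifying that this Monge/total-monotonicity structure survives the two non-standard features of the recurrence, and pinning down the composition cost that the accounting above assumes. The gadget case should be benign, since choosing $u$ freely over $\Sigma$ is still an edit-distance alignment of a $W$-window against an unconstrained string and inherits the Monge property. The merging case is the delicate one: the $(k-1)$-letter overlap forces a single dyadic table over the concatenation of blocks to be queried with \emph{shifted} boundaries, and I would need to prove that these shifts, together with the gluing at block/gadget boundaries where the subtables' boundary conditions change, preserve total monotonicity so that the SMAWK-type queries remain valid and cheap. Once the Monge property and the boundary gluing are in place, what remains is the bookkeeping of sweeping the $n$ columns while maintaining the $\Order(\log k)$ active dyadic transfers and reconstructing an optimal $X_{\mathrm{ED}}$ by standard traceback.
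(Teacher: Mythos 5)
Your end point is recognizably the paper's own technique---dyadic decomposition of $W$, precomputed boundary-to-boundary block transfers, totally monotone ($\DIST$-style) matrices queried with SMAWK, and an $\Order(n^2\log^2k)$ budget dominated by precomputation---but the path you take to it contains a misdiagnosis and, by your own admission, a hole at the decisive step. First, the diagnosis: in {\ETFSDP} the factor $k$ is \emph{not} spent on windowed minimizations. The merge recurrence inspects only $\Order(1)$ entries (in the row $k+1$ positions above), and the gadget-row window minima can be maintained in amortized $\Order(1)$ per cell by tracking the rightmost window minimum (this is exactly Lemma~\ref{lem:gadgetrow}). The factor $k$ is simply the number of cells: the table has $\Theta(k|\mathcal{I}|)$ rows and $n+1$ columns, so even at $\Order(1)$ per cell one pays $\Theta(n^2k)$. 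Hence your claim that ``the task reduces to answering all windowed minimizations in $\TOrder(n^2)$ total time'' is false as a reduction; what must be achieved is never touching the ordinary rows at all. You do recover the correct goal later (summarizing each $k$-row block in polylogarithmic amortized time per column), but the proof cannot rest on the stated reduction.

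The serious gap is the step you flag as hardest, because it is where your argument stops and it is an artifact of your own decomposition. By taking as blocks ``maximal chains of merged patterns,'' you force precomputed transfers to be composed \emph{across} merges, i.e., queried with shifted boundaries, and you would indeed need to prove that total monotonicity survives these shifts and the gluing at gadget boundaries; you do not. The paper dissolves this difficulty instead of solving it: every possibly mergeable row and every gadget row is kept explicit (computed in $\Order(n\log k)$, respectively amortized $\Order(n)$, time), and precomputed blocks are used only to jump over the $k$ ordinary rows between a gadget row and the next mergeable row. Those blocks are plain edit-distance tables between two substrings of $W$, for which total monotonicity is classical; the merge option is then an $\Order(1)$-per-cell recurrence reading the previous (stored) mergeable row, so no shifted composition ever arises. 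Two further repairs are needed to make your accounting sound. Your $(\min,+)$-composition cost of ``$\Order(\log k)$ per combination'' is unsubstantiated; the correct bookkeeping is strip-by-strip propagation---each of the $\Order(\log k)$ dyadic strips of a block is tiled by $\lceil n/2^i\rceil$ square blocks, each applied in $\Order(2^i)$ time by SMAWK, giving $\Order(n\log k)$ per mergeable row, with the $\log^2k$ arising only from the $\Order\bigl(\sum_{i}(n/2^i)^2\,2^{2i}\,i\bigr)=\Order(n^2\log^2k)$ precomputation over pairs of equal-length dyadic pieces. Finally, ``standard traceback'' is unavailable because the block interiors are never computed; one needs the augmentation of Lemma~\ref{lem:traceback} (Klein's planar shortest-path structure on top of the lookup tables) to recover an optimal path inside a block in time linear in its length, which is what yields the $\Order(kn)$ traceback and hence a witness string $X_{\mathrm{ED}}$, not merely the distance.
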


Let us also stress that the algorithm underlying Theorem~\ref{the:etfs} works under edit distance with arbitrary weights at no extra cost.

We also consider a generalized version of {\ETFS}, which we denote by {\AETFS} (\textbf{A}rbitrary lengths, \textbf{E}dit distance, \textbf{T}otal order, \textbf{F}requency, \textbf{S}anitization). The only difference in {\AETFS} with respect to {\ETFS} is that $\mathcal{S}$ can contain elements (sensitive patterns) of arbitrary lengths. This generalization is evidently more useful as it drops the restriction of fixed-length sensitive patterns; it also turns out to be algorithmically much more challenging. In both {\ETFS} and {\AETFS}, we make the standard assumption that substrings of $W$ are represented as intervals over $[0,n-1]$, and thus each element in $\mathcal{S}$ has an $\Order(1)$-sized representation. We further assume that $\mathcal{S}$ satisfies the properties of \emph{closure} and \emph{minimality} (formally defined in Section~\ref{sec:bg}), which in turn ensure that $\mathcal{S}$ has an $\Order(n)$-sized representation.
\begin{example} \label{ex:AETFS}
  Consider the same input string $W=\texttt{ecabaaaaabbbadf}$ as in Example~\ref{ex:ETFS}. Further let $k = 3$ and the set of sensitive patterns be $\mathcal{S}=\{\texttt{aba},\texttt{aa},\texttt{abbba}\}$.
  Then, string $Y_{\mathrm{ED}} = \texttt{ecab\#abb\#bbbadf}$ is a feasible string and is a closest to $W$ in terms of edit distance.
  Notice that, we cannot merge all of the three consecutive non-sensitive patterns $\texttt{abb}$, $\texttt{bbb}$, and $\texttt{bba}$ into one since it will result in an occurrence of the sensitive pattern \texttt{abbba}; we thus rather create \texttt{abb\#bbba}.
\end{example}

Our second main result is the following.

\begin{restatable}{theorem}{aetfsatheorem}\label{the:aetfs}
 The {\AETFS} problem can be solved in $\Order(n^2\log^2n)$ time.
\end{restatable}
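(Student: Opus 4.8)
The plan is to reduce {\AETFS} to a minimum-cost alignment between $W$ and the language $L$ of feasible strings, and then to reuse the DAG-of-small-DP-tables machinery developed for Theorem~\ref{the:etfs}. I would first describe $L$ combinatorially: cutting a feasible string at its $\#$'s leaves a sequence of $\Sigma$-blocks for which (i) no block contains an element of $\mathcal{S}$, and (ii) the length-$k$ windows contributed by the blocks, read left to right, spell out the sequence $p_1,\dots,p_m$ of non-sensitive length-$k$ substrings of $W$; blocks shorter than $k$ contribute no window and act only as gadgets that reduce the edit distance to $W$. Because two windows sharing a block overlap in $k-1$ letters, the window-bearing part of a block traces a substring of $W$, so $L$ is captured by a graph whose edges are labelled by admissible blocks and $\#$-gadgets and whose source-to-sink paths are exactly the feasible strings. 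An optimal alignment is then a shortest source-to-sink path once each edge is charged the cost of aligning its label against the relevant interval of $W$; as in {\ETFS}, these per-edge costs are encoded by $\DIST$ matrices and composed along the graph under $(\min,+)$.

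The genuinely new difficulty is enforcing (i) for patterns of arbitrary length. In {\ETFS} merging overlapping windows into a longer block is free, since a merged block only ever contains length-$k$ windows already certified non-sensitive; for {\AETFS} this fails, as Example~\ref{ex:AETFS} shows, where merging \texttt{abb}, \texttt{bbb}, \texttt{bba} would create the length-$5$ sensitive pattern \texttt{abbba}. I would therefore thread a pattern-matching automaton for $\mathcal{S}$ through the block-building edges. Since every element of $\mathcal{S}$ is an interval of $W$ and $\mathcal{S}$ is closed and minimal, hence $\Order(n)$-representable, I would build this recognizer on a suffix structure of $W$ and, for each starting position, precompute the longest extension of a block before some element of $\mathcal{S}$ first occurs. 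This caps the admissible blocks and replaces the ``one maximal run $=$ one block'' picture of {\ETFS} by runs that must be split, each split forcing a $\#$ and a re-copied overlap, precisely as \texttt{abbbadf} is split into \texttt{abb} and \texttt{bbbadf} in Example~\ref{ex:AETFS}.

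With the admissible blocks delimited, I would carry over the tabulation of Theorem~\ref{the:etfs}: decompose each admissible block into $\Order(\log n)$ canonical dyadic subintervals of $W$, precompute and reuse the $\DIST$ matrices associated with these subintervals, and combine them by near-linear-time composition of totally monotone matrices. The two logarithmic factors now scale with $n$ rather than $k$, because sensitive patterns and the subtables governed by them may have length up to $n$; this is exactly why the bound is $\Order(n^2\log^2 n)$. The main obstacle I anticipate lies in making pattern-avoidance compatible with this oblivious tabulation: the dyadic $\DIST$ matrices are precomputed without regard to where blocks may legally begin, end, or be split, so I must argue that admissibility only restricts which precomputed pieces are concatenated along the graph, never forcing a $\DIST$ matrix to be recomputed per block, and that the extra branching introduced by the split choices is absorbed into the existing budget. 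Since split points are pinned to the $\Order(n)$ occurrences of elements of $\mathcal{S}$ in $W$, I expect both the number of graph edges and the total size of the shared tables to remain in $\TOrder(n^2)$, yielding the claimed running time; verifying this interaction cleanly is the crux of the proof.
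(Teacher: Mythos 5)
Your overall architecture is the paper's: a DAG whose source-to-sink paths enumerate the feasible strings (the paper's ``decision DAG''), admissibility of merges enforced by precomputing, for each position, how far a block can extend before a sensitive pattern occurs (the paper's $F$ array and its pruning rules), and an $\TOrder(n^2)$ tabulation of DIST-type lookup tables over dyadic substrings of $W$ combined by linear-time boundary propagation for totally monotone matrices (SMAWK), exactly as in Sections~3--5 of the paper. So this is not a different route; the question is whether your plan closes the one step you yourself flag as the crux --- and there it has a genuine gap.

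The gap is the budget for long sensitive patterns, and your stated reason for optimism does not address it. In your graph an edge exists for every admissible pair (start window, end window), i.e., for every prefix of every maximal sensitive-free run of mergeable windows. These splits are \emph{optional}, not pinned to the $\Order(n)$ occurrences of elements of $\mathcal{S}$: if a run admits merging up to $\ell$ consecutive windows, it contributes $\Theta(\ell)$ edges, so the edge set has size $\Theta(n\ell)$ (up to $\Theta(n^2)$) in the worst case. Even with the dyadic decomposition, processing one edge means propagating a length-$(n+1)$ boundary vector through $\Order(\log n)$ precomputed strips, i.e., $\TOrder(n)$ per edge, for a total of $\TOrder(n^2\ell)$ --- cubic or worse, matching only the paper's intermediate Lemma~\ref{lem:AETFS}, not Theorem~\ref{the:aetfs}. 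The missing idea is to \emph{share} computation across all edges with nested extents: keep each maximal run as a path of single-row nodes with an exit at every step (the paper's m-paths), decompose that \emph{path} (not each edge) into dyadic pieces, contract each piece into a single ``jump'' row computed from the lookup tables in $\Order(n)$ time (the paper's j-nodes), and handle exits inside a piece by one shared copy per depth and per dyadic level, seeded with the point-wise minimum over all parallel runs (the paper's c-nodes). This sharing is sound precisely because an early exit always inserts a $\#$ and therefore can never complete a sensitive pattern, so the parallel runs need not be distinguished there; only full jumps must remain per-run, since completing a merge is what can create a sensitive pattern. This contraction is what replaces the $\Theta(Ln)$ term by $\Order(n^2\log \ell)$ work plus $\Order(n^2\log^2\ell)$ preprocessing, giving $\Order(n^2\log^2k+n^2\log^2\ell)=\Order(n^2\log^2 n)$. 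Without it (or an equivalent sharing mechanism), your plan as written does not reach the claimed bound; you would also still need the amortized gadget-row computation and the traceback machinery (Lemmas~\ref{lem:traceback} and~\ref{lem:gadgetrow}), which your sketch leaves implicit.
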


Our algorithms are thus optimal up to subpolynomial factors, unless SETH fails. 
Let us describe the main ideas behind the new techniques we develop. As in Example~\ref{ex:AETFS}, a sensitive pattern of length greater than $k$ might be generated by merging multiple non-sensitive patterns. In {\AETFS}, we have to consider avoiding such \emph{invalid} merge operations. 
If we enumerate all \emph{valid} combinations of merging non-sensitive patterns, and run the DP for {\ETFS} for all the cases, then we can obtain an optimal solution to {\AETFS}. Our main idea for reducing the time complexity is to carefully maintain a directed acyclic graph~(DAG) for representing all such valid combinations. We first construct the DAG, and then plug a small DP table into each node of the DAG. This technique gives us an $\Order(n^3)$-time solution to {\AETFS}. To achieve $\TOrder(n^2)$ time, we focus on redundancy in the DP tables. When the size of the DP tables is large, there must be multiple sub-tables corresponding to the same pair of strings. Before propagating, we precompute lookup table structures of size $\Order(n^2\log^2n)$ according to dyadic intervals on $[0,n-1]$. To this end, we modify the data structure proposed in~\cite{brubach2018succinct}. Then, we decompose the DP tables into sub-tables according to these dyadic intervals. We compute only boundaries of such sub-tables using the precomputed lookup table structures, and thus, we obtain an optimal alignment for {\AETFS} in $\Order(n^2\log^2n)$ total time. By applying the same technique to {\ETFS}, we obtain an $\Order(n^2\log^2k)$-time solution, which improves the state of the art by a factor of $k/\log^2k$~\cite{bernardini_et_al:LIPIcs:2020:12132}. 

In a nutshell, our main technical contribution is that we manage to align a string of length $n$ and a specific regular expression of size $\Omega(nk|\Sigma|)$ in $\TOrder(n^2)$ time. We can also view the solution spaces of {\ETFS} and {\AETFS} as context-free languages. The main idea of our {\AETFS} algorithm is to first preprocess a set $N$ of non-terminals, such that we can later use them in $\Order(n)$ time each. We then write the context-free language as a new language, which is accepted by a Deterministic Acyclic Finite State Automaton (DASFA), taking the set $N$ as its terminals. In this paper, we develop several techniques to reduce the size of the DAFSA (cf.~DAG) to $\tilde{\Order}(n)$ and efficiently precompute the set $N$ (cf.~lookup tables) in $\tilde{\Order}(n^2)$ time. Thus, beyond string sanitization, our techniques may inspire solutions to other problems related to regular expressions or context-free grammars.

\paragraph*{Paper Organization} Section~\ref{sec:bg} introduces the basic definitions and notation used throughout, and also provides a summary of the currently fastest algorithm for solving {\ETFS}~\cite{bernardini_et_al:LIPIcs:2020:12132}. In Section~\ref{sec:tab}, we describe our lookup table structures. In Section~\ref{sec:AETFS}, we present the $\Order(n^3)$-time algorithm for solving {\AETFS}. This algorithm is refined to an $\TOrder(n^2)$-time algorithm, which is described in Section~\ref{sec:nsquared}. Along the way, in Section~\ref{sec:nsquared}, we also describe an $\TOrder(n^2)$-time algorithm for solving {\ETFS}. 


\section{Preliminaries}\label{sec:bg}
\paragraph*{Strings} 
An \emph{alphabet} $\Sigma$ is a finite set of elements called $\emph{letters}$.
Let $S=S[0]S[1]\cdots S[n-1]$ be a \emph{string} of length $|S|=n$ over an alphabet $\Sigma$ of size $\sigma=|\Sigma|$. Let $\Gamma = \{\ominus, \oplus, \otimes\}$ be a set of special letters with $\Gamma\cap\Sigma = \emptyset$.
By $\Sigma^*$ we denote the set of all strings over $\Sigma$, and by $\Sigma^k$ the set of all length-$k$ strings over $\Sigma$.
For two indices $0 \leq i \leq j \leq n-1$,  $S[i\dd j]=S[i]\cdots S[j]$ is the \emph{substring} of $S$ that starts at position $i$ and ends at position $j$ of $S$. By $\varepsilon$ we denote the \emph{empty string} of length $0$. A \emph{prefix} of $S$ is a substring of the form $S[0\dd j]$, and a \emph{suffix} of $S$ is a substring of the form $S[i\dd n-1]$. Given two strings $U$ and $V$ we say that $U$ has a {\em suffix-prefix overlap} of length $\ell>0$ with $V$ if and only if the length-$\ell$ suffix of $U$ is equal to the length-$\ell$ prefix of $V$, i.e., $U[|U|-\ell \dd |U|-1]=V[0\dd \ell-1]$.

We fix a string $W$ of length $n$ over an alphabet $\Sigma$. We assume that $\Sigma=\{1,\ldots,n^{\Order(1)}\}$. If this is not the case, we use perfect hashing~\cite{DBLP:journals/jacm/FredmanKS84} to hash $W[i]$, for all $i\in[1,n]$, and obtain another string over $\Sigma=\{1,\ldots,n\}$ in $\Order(n)$ time with high probability or in $\Order(n \log n)$ time deterministically via sorting. We consider the obtained string to be $W$. We also fix an integer $0<k<n$.
Unless specified otherwise, we refer to a length-$k$ string or a {\em pattern} interchangeably. An occurrence of a pattern is uniquely defined by its starting position. Let $\mathcal{S}_k$ be the set representing the sensitive patterns as starting positions over $\{0,\ldots, n-k\}$ with the following closure property: for every $i\in\mathcal{S}_k$, any $j$ for which $W[j\dd j+k-1]=W[i\dd i+k-1]$, must also belong to $\mathcal{S}_k$. That is, if an occurrence of a pattern is in $\mathcal{S}_k$, then all its occurrences are in $\mathcal{S}_k$. A substring $W[i\dd i+k-1]$ of $W$ is called {\em sensitive} if and only if $i\in \mathcal{S}_k$; $\mathcal{S}_k$ is thus the complete set of occurrences of sensitive patterns. The difference set $\mathcal{I}=\{0,\ldots, n-k\} \setminus \mathcal{S}_k$ is the set of occurrences of length-$k$ {\em non-sensitive} patterns. 

For any substring $U$, we denote by $\mathcal{I}_U$ the set of occurrences in $U$ of non-sensitive length-$k$ strings over $\Sigma$. (We have that $\mathcal{I}_W=\mathcal{I}$.) We call an occurrence $i$ the {\em t-predecessor} of another occurrence $j$ in $\mathcal{I}_U$ if and only if $i$ is the largest element in $\mathcal{I}_U$  that is less than $j$. This relation induces a {\em strict total order} on the occurrences in $\mathcal{I}_U$. We call a subset $\mathcal{J}$ of $\mathcal{I}_U$ a \textit{t-chain} if for all elements in $\mathcal{J}$ except the minimum one, their t-predecessor is also in $\mathcal{J}$. For two strings $U$ and $V$, chains $\mathcal{J}_U$ and $\mathcal{J}_V$ are {\em equivalent}, denoted by $\mathcal{J}_U \equiv \mathcal{J}_V$, if and only if $|\mathcal{J}_U|=|\mathcal{J}_V|$ and $U[u\dd u+k-1]=V[v\dd v+k-1]$, where $u$ is the $j$-th smallest element of $\mathcal{J}_U$ and $v$ is the $j$-th smallest of $\mathcal{J}_V$, for all $j\le |\mathcal{J}_U|$.

Given two strings $U$ and $V$ the {\em edit distance} $d_\mathrm{E}(U,V)$ is defined as the minimum number of elementary edit operations (letter insertion, deletion, or substitution) that transform one string into the other. Each edit operation can also be associated with a cost: a fixed positive value. Given two strings $U$ and $V$ the {\em weighted edit distance} $d_{\mathrm{WE}}(U,V)$ is defined as the minimal total cost of a sequence of edit operations to transform one string into the other. We assume throughout that the three edit operations all have unit weight. However, as mentioned in Section~\ref{sec:intro}, our algorithm for the \ETFS{} problem (formally defined below) also works for arbitrary weights at no extra cost. The standard algorithm to compute the edit distance between two strings $U$ and $V$~\cite{Lev} works by creating a $(|U|+1) \times (|V|+1)$ DP table $D$ with $D[i][j] = d_{\mathrm{E}}(U[0\dd i-1], V[0\dd j-1])$. The sought edit distance is thus $d_{\mathrm{E}}(U, V) = D[|U|][|V|]$. Since we compute each table entry from the entries to the left, top and top-left in $\Order(1)$ time, the algorithm runs in $\Order(|U|\cdot|V|)$ time. Moreover, we can find an optimal (minimum cost) alignment by tracing back through the table.

\paragraph*{The {\ETFS} Problem} We formally define {\ETFS}, one of the problems considered in this paper.

\begin{problem2}[{\ETFS}]\label{prob:et-string} 
Given a string $W$ of length $n$, an integer $k>1$, and a set $\mathcal{S}_k$ (and thus set $\mathcal{I}$), construct a string $X_{\mathrm{ED}}$ which is at minimal (weighted) edit distance from $W$ and satisfies:
\begin{packed_enum}
    \item[\textbf{C1}] $X_{\mathrm{ED}}$ does not contain any sensitive pattern.
    \item[\textbf{P1}] $\mathcal{I}_W \equiv \mathcal{I}_{X_{\mathrm{ED}}}$, i.e., the t-chains $\mathcal{I}_W$ and $\mathcal{I}_{X_{\mathrm{ED}}}$ are equivalent.
\end{packed_enum}
\end{problem2}

\paragraph*{The {\AETFS} Problem}
The length of sensitive patterns in the {\ETFS} setting is fixed.
In what follows, we define a generalization of the {\ETFS} problem which allows for arbitrary length sensitive patterns. Let $\mathcal{S}$ be a set of intervals with the two following properties~(closure property and minimality property):~(i) For every $[i, j]\in\mathcal{S}$, any $[i', j']$ for which $W[i'\dd j']=W[i\dd j]$, must also belong to $\mathcal{S}$; and~(ii) any proper sub-interval of $[i, j] \in \mathcal{S}$ is not in $\mathcal{S}$.
It is easy to see that $|\mathcal{S}| \le n$ from its minimality.
Now, we redefine notions of sensitive and non-sensitive patterns as follows:
A sensitive pattern is an \emph{arbitrary length} substring $W[i\dd j]$ of $W$ for each $[i, j] \in \mathcal{S}$.
For a fixed $k$, a non-sensitive pattern is a length-$k$ substring of $W$ containing no sensitive pattern as a substring.
\begin{problem2}[{\AETFS}]\label{prob:aet-string} 
Given a string $W$ of length $n$, an integer $k>1$, and a set $\mathcal{S}$ (and thus set $\mathcal{I}$), construct a string $Y_{\mathrm{ED}}$ which is at minimal edit distance from $W$ and satisfies:
\begin{packed_enum}
    \item[\textbf{C1}] $Y_{\mathrm{ED}}$ does not contain any sensitive pattern.
    \item[\textbf{P1}] $\mathcal{I}_W \equiv \mathcal{I}_{Y_{\mathrm{ED}}}$, i.e., the t-chains $\mathcal{I}_W$ and $\mathcal{I}_{Y_{\mathrm{ED}}}$ are equivalent.
\end{packed_enum}
\end{problem2}

\paragraph*{The {\ETFSDP} Algorithm}For independent reading we describe here {\ETFSDP}, the algorithm from~\cite{bernardini_et_al:LIPIcs:2020:12132} that solves the {\ETFS} problem in $\Order(n^2k)$ time. The output string $X_{\mathrm{ED}}$ is a string that contains all non-sensitive patterns in the same order as in $W$.
For each pair of consecutive non-sensitive patterns, their occurrences in $X_{\mathrm{ED}}$ are either~(i) overlapping by $k-1$ letters (e.g., \texttt{eca} and \texttt{cab} in Example~\ref{ex:ETFS}) or~(ii) delimited by a string over $\Sigma_\#$ which contains no length-$k$ string over $\Sigma$~(e.g., $\texttt{\#aa\#}$ in Example~\ref{ex:ETFS}). We call such strings \emph{gadgets}. For case~(ii), we use the following regular expressions:
$$
\Sigma^{<k} = (a_1|a_2|\ldots|a_{|\Sigma|}|\varepsilon)^{k-1},
$$
where $\Sigma = \{a_1, a_2, \ldots, a_{|\Sigma|}\}$.
Also, the special letters $\ominus, \oplus, \otimes \in \Gamma$ correspond to regular expressions $(\Sigma^{<k}\#)^\ast$, $\#(\Sigma^{<k}\#)^\ast$, and $(\#\Sigma^{<k})^\ast$, respectively.
Let $N_0, N_1,\ldots, N_{|\mathcal{I}|-1}$ be the sequence of non-sensitive patterns sorted in the order in which they occur in $W$. In what follows, we fix string $T = \ominus N_0\oplus N_1\oplus\cdots\oplus N_{|\mathcal{I}|-1}\otimes$ of length $(k+1)|\mathcal{I}|+1$. String $T$ corresponds to the regular expression $R$ that represents the set of all feasible solutions (feasible strings) in which all non-sensitive patterns in the string are delimited by stings over $\Sigma_\#$.
Moreover, we need to consider feasible strings in which a non-sensitive pattern overlaps the next one. Let $M$ be a binary array of length $|\mathcal{I}|$ such that for each $0 \le i \le |\mathcal{I}|-1$, $M[i] = 1$ if $i > 0$ and $N_{i-1}$ has a suffix-prefix overlap of length $k-1$ with $N_i$, and $M[i] = 0$ otherwise. Namely, $M[i] = 1$ implies that $N_{i-1}$ and $N_i$ can be merged for $0 < i \le |\mathcal{I}|-1$.

Let $E$ be a table of size $((k+1)|\mathcal{I}|+1)\times(n+1)$.
The rows of $E$ correspond to string $T$ defined above and the columns to string $W$.
Note that the leftmost column corresponds to the empty string $\varepsilon$ as in the standard edit distance DP table.
Each cell $E[i][j]$ contains the edit distance between the regular expression corresponding to $T[0\dd i]$ and $W[0\dd j-1]$.
We classify the rows of $E$ into three categories: {\em gadget rows}; {\em possibly mergeable rows}; and {\em ordinary rows}.
We call every row corresponding to a special letter in $\Gamma$ a gadget row. Namely, rows with index $i \equiv 0\bmod (k+1)$ are gadget rows.
Also, we call every row corresponding to the last letter of a non-sensitive pattern a possibly mergeable row. Namely, rows with index $i \equiv -1\bmod (k+1)$ are possibly mergeable rows.
All the other rows are called ordinary rows.
The recursive formula of ordinary rows is the same as in the standard edit distance solution:
\begin{equation*}\label{eq:ETFSordrow}
E[i][j] = \min\left. \begin{cases}E[i-1][j] + 1, &\text{(insert}\text{)}\\ E[i][j-1] + 1, &\text{(delete}\text{)}\\E[i-1][j-1] + I[T[i] \neq W[j-1]], &\text{(match or substitute),}\end{cases} \right.
\end{equation*}
where $I$ is an indicator function: $I[T[i] \neq W[j-1]] = 1$ if $T[i]\neq W[j-1]$, and 0 otherwise.
Next, consider a possibly mergeable row $E[i][\cdot]$ which is the last row of the non-sensitive pattern $N_h$. If $M[h] = 0$, then the recursive formula is the same as that of ordinary rows. Otherwise ($M[h] = 1$), $N_{h-1}$ and $N_h$ can be merged. Merging them means that the values in the previous mergeable row $E[i-k-1][\cdot]$ will be propagated to $E[i][\cdot]$ directly without considering the $k$ rows below. Thus, the recursive formula is:
\begin{equation*}\label{eq:ETFSmergerow}
E[i][j] = \min\left.\begin{cases}
E[i-1][j]+1, &\text{\hspace{1.8cm} (insert)}\\ 
E[i][j-1]+1, &\text{\hspace{1.8cm} (delete)}\\ 
E[i-1][j-1]+I[T[i] \neq W[j-1]], &\text{\hspace{1.8cm} (match or substitute)}\\
E[i-k-1][j]+1, &\text{if}~M[h]=1~\text{(insert and  merge)}\\
E[i-k-1][j-1] + I[T[i]\neq W[j-1]],&\text{if}~M[h]=1~\text{(match or sub.~and merge).}
\end{cases}\right.
\end{equation*}

Next, consider a gadget row $E[i][\cdot]$ which corresponds to a special letter in $\Gamma$.
Because of the form of regular expressions corresponding to special letters, a \# can either be inserted or substituted directly after a non-sensitive pattern, or be preceded by another \# no more than $k$ positions earlier. This results in the following recursive formula:
\begin{equation*}\label{eq:ETFSgadgetrow}
E[i][j] = \min\left. \begin{cases}
E[i-1][j]+1, &\text{(insert)}\\
E[i-1][j-1]+1, &\text{(substitute)}\\
E[i][j-1]+1,\ldots, E[i][\max\{0,j-k\}]+1, &\text{(delete or extend gadget).}
\end{cases}\right.
\end{equation*}

For completeness, we write down the recursive formula for initializing the leftmost column:
\begin{equation*}\label{eq:ETFSbasecol}
E[i][0] = \begin{cases}
E[i-k-1][0]+1, & \text{if}~ i \equiv -1\bmod (k+1)\land M[h]=1~\text{(merge)}\\
E[i-1][0]+1, & \text{otherwise (no merge).}
\end{cases}
\end{equation*}

Unlike in the standard setting~\cite{Lev}, the edit distance between $W$ and any string matching the regular expression $R$ is not necessarily found in its bottom-right entry $E[|\mathcal{I}|(k+1)][|W|]$. 
Instead, it is found among the rightmost $k$ entries of the last row (in case $X_{\mathrm{ED}}$ ends with a string in $\otimes$), and the rightmost entry of the second-last row (when $X_{\mathrm{ED}}$ ends with the last letter of the last non-sensitive pattern). After computing the edit distance value, we construct an $X_{\mathrm{ED}}$. To do so, when computing each entry $E[i][j]$, we memorize a backward-pointer to an entry from which the minimum value for $E[i][j]$ was obtained.
We then construct $X_{\mathrm{ED}}$ from right to left with respect to the sequence of edit operations corresponding to an optimal alignment obtained by the backward-pointers.

\section{Compact Lookup Table Structure for Squared Blocks}\label{sec:tab}
In this section we consider the standard edit distance table, and propose a data structure which can answer some queries on
a $b\times b$ sub-table of the DP table, which we call a 
\emph{block}, corresponding to two strings of the same length.
Our data structure is similar to the one proposed in~\cite{brubach2018succinct}, tailored, however, to our needs. We next provide some further definitions about blocks. Let $B$ be a $b\times b$ block to be processed. The top~(resp.~bottom) row of $B$ is called the input~(resp.~output) row of $B$. Similarly, the leftmost~(resp.~rightmost) column of $B$ is called the input~(resp.~output) column of $B$. A cell in the input~(resp.~output) row or column is called an input~(resp.~output) cell.

In the following, we propose a lookup table for $b\times b$ blocks that computes all output cell values of a block in $\Order(b)$ time for any given block and input cell values of the block. We modify the following known result to enhance it with a trace-back functionality.
%
\begin{theorem}[Theorem~1 in \cite{brubach2018succinct}] \label{thm:brubach}
  Given two strings both of length $b$ corresponding to a $b\times b$ block,
  we can construct a data structure of size $\Order(b^2)$ in $\Order(b^2\log b)$ time
  such that given any values for the input row and column of the block,
  the data structure can compute the output row and column of the block in $\Order(b)$ time.
\end{theorem}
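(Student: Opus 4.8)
The plan is to reduce the behaviour of the block to a single precomputed $\DIST$ matrix, as in the classical boundary-gluing framework for edit-distance DP tables, and then answer each query by a totally monotone matrix search. First I would model the block as the standard alignment DAG on the $(b+1)\times(b+1)$ grid of cells: from cell $(r,c)$ there is a horizontal and a vertical edge of weight $1$ (deletion / insertion) and a diagonal edge of weight $0$ or $1$ (match / substitution), all directed so that both coordinates are non-decreasing along every path. Let the input boundary be the top row together with the left column and the output boundary be the bottom row together with the right column; each has $\Order(b)$ cells, ordered naturally along the boundary. Define $\DIST_B[p][q]$ to be the length of the shortest path inside the block from input cell $p$ to output cell $q$, and $+\infty$ when $q$ is not reachable from $p$.

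The first step is the gluing lemma: for any assignment $\mathrm{in}[\cdot]$ of values to the input boundary (the true DP values entering the block), the value at output cell $q$ equals $\mathrm{out}[q]=\min_p\big(\mathrm{in}[p]+\DIST_B[p][q]\big)$. This holds because every directed path ending at $q$ crosses the input boundary in exactly one cell (the boundary separates $q$ from everything lying strictly up-left), its in-block portion costs at least the in-block shortest path, and concatenating an optimal in-block path with an optimal prefix attains the bound. Thus a query is exactly a min-plus matrix--vector product with $\DIST_B$. The second step is to observe that $\DIST_B$ is a Monge matrix under the chosen orderings: for $p<p'$ and $q<q'$ one has $\DIST_B[p][q]+\DIST_B[p'][q']\le \DIST_B[p][q']+\DIST_B[p'][q]$, by the standard planar non-crossing argument (shortest paths $p\to q'$ and $p'\to q$ must cross, and exchanging their tails yields paths $p\to q$ and $p'\to q'$ of no larger total weight). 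Adding the row-dependent term $\mathrm{in}[p]$ preserves this inequality, so $A[p][q]=\mathrm{in}[p]+\DIST_B[p][q]$ is again totally monotone; running SMAWK on $A$ computes all $\Order(b)$ column minima, i.e.\ the entire output boundary, in $\Order(b)$ time, using $\Order(1)$-time access to each stored entry of $\DIST_B$.

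The crux, and the step I expect to be the main obstacle, is building $\DIST_B$ within the claimed $\Order(b^2\log b)$ bound: running one single-source shortest-path sweep per input cell would cost $\Order(b^3)$. The way around this is the unit-cost structure, which forces adjacent rows and columns of $\DIST_B$ to differ entrywise by at most one; equivalently $\DIST_B$ is a \emph{unit-Monge} matrix whose ``core'' is a permutation matrix $P$ with only $\Order(b)$ nonzeros, and $\DIST_B[p][q]$ can be written as a fixed additive term plus a two-dimensional dominance count over $P$. I would compute $P$ incrementally, sweeping the boundary and tracking how the single breakpoint of each difference vector moves from one row (or column) of $\DIST_B$ to the next, maintaining the running counts in a balanced search tree or Fenwick tree; each update and dominance query costs $\Order(\log b)$, for $\Order(b^2\log b)$ in total. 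Finally I would materialize the full $\Order(b^2)$-size table $\DIST_B$ from $P$ by prefix-summing the dominance counts in $\Order(b^2)$ time, so that SMAWK can read entries in constant time at query time. The trace-back enhancement announced in the surrounding text is not needed for the statement as phrased here---recovering entry values suffices---so the proof of this (restated) theorem reduces to the $\DIST$-plus-SMAWK pipeline above, with the incremental unit-Monge construction as the one genuinely delicate ingredient.
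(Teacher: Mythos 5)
Your query half of the argument is exactly the route taken here: regard the block as a grid DAG, store the boundary-to-boundary distance matrix ($M_B$ in the paper, your $\DIST_B$), observe that adding the input values row-wise preserves (total) monotonicity, and run SMAWK to get all output values in $\Order(b)$ time. That part is correct and coincides with Brubach--Ghurye's scheme as reproduced in Section~\ref{sec:tab}.

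The divergence --- and the gap --- is in how $\DIST_B$ is built within $\Order(b^2\log b)$. The paper does this by staying inside planar shortest-path machinery: it constructs Klein's multiple-source shortest-path structure $\mathcal{D}_B$ for the grid graph in $\Order(b^2\log b)$ time (the cited original uses Schmidt's all-boundary-pairs algorithm instead) and then fills the $(2b-1)\times(2b-1)$ table with one $\Order(\log b)$-time distance query per entry. Your replacement rests on the claim that the unit-cost edit-distance $\DIST_B$ is a unit-Monge matrix with an $\Order(b)$-size permutation core, computable by a breakpoint-tracking sweep. This is not the elementary consequence of ``adjacent entries differ by at most one'' that you present it as: bounded differences alone do not give a permutation core, and for edit distance \emph{with unit-cost substitutions} the unit-Monge property is not inherited directly from the LCS theory --- one has to pass through the reduction $d_{\mathrm{E}}(A,B)=|A|+|B|-\max(2m+s)$ and Tiskin's blow-up of rational-weighted alignment to LCS before any core exists. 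Your incremental Fenwick-tree sweep for computing that core is asserted, not proved, and you yourself flag it as the delicate ingredient; since constructing $\DIST_B$ fast \emph{is} the theorem, the crux remains open in your write-up. There is also a cost to this route even if completed: it is intrinsically tied to unit (or fixed rational) weights, whereas the paper's planar-shortest-path construction is weight-agnostic --- a property the paper explicitly records after Lemma~\ref{lem:traceback} (``works under edit distance with arbitrary weights at no extra cost'') and relies on for its weighted-edit-distance claims. So to repair your proof you should either invoke Schmidt's or Klein's algorithm for the construction step, as the paper does, or fully develop the unit-Monge argument and accept that it proves a strictly narrower statement.
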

In~\cite{brubach2018succinct}, the authors did not refer to tracing back, i.e., it is not clear how to obtain an optimal alignment using Theorem~\ref{thm:brubach}. We prove that we can trace back a shortest path to an output cell in a $b\times b$ block in $\Order(b)$ time using $\Order(b^2\log b)$ additional space. This yields an optimal alignment.
We next briefly describe the data structure of Theorem~\ref{thm:brubach} and explain how we modify it.

\subsection{Constructing a Data Structure for a Pair of Strings}
For constructing the data structure of Theorem~\ref{thm:brubach},
Brubach and Ghurye~\cite{brubach2018succinct} utilize the result of \cite{schmidt1998all}.
Instead, we use the following result by Klein~\cite{klein2005multiple}, which is more general.

\begin{theorem}[\cite{klein2005multiple}] \label{thm:klein}
  Given an $N$-node planar graph with non-negative edge labels,
  we can construct a data structure of size $\Order(N\log N)$ in $\Order(N\log N)$ time 
  such that given a node $s$ in the graph and 
  another node $t$ on the boundary of the infinite face,
  the data structure can compute the maximum~(or minimum) distance
  from $s$ to $t$ in $\Order(\log N)$ time. 
  Also, if the graph has constant degree,
  then we can compute the shortest $s$-$t$ path in time linear in the length of the path.
\end{theorem}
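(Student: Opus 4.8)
Since the final statement is Klein's multiple-source shortest-path (MSSP) result, I would prove it by reducing the stated query form to the canonical MSSP setting and then building the tree-maintenance machinery that underlies it. A query asks for the distance from an arbitrary node $s$ to a node $t$ on the infinite face. Passing to the reverse graph (which is again planar, with the same embedding and the same non-negative labels), this becomes a query for the distance \emph{from} the boundary node $t$ \emph{to} the arbitrary node $s$. Hence it suffices to preprocess so that, for every source lying on the distinguished (infinite) face, distances to all $N$ nodes can be reported quickly; this is exactly MSSP with the sources ranged along one face.

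The engine is to maintain a single shortest-path tree $T$ as its root slides around the boundary face. First I would compute $T$ rooted at one boundary node $v_0$ in $\Order(N\log N)$ time (planar Dijkstra suffices). I then move the root to the next boundary node $v_1$, then $v_2$, and so on around the cycle of boundary vertices. Each move is realized by a sequence of \emph{pivots}, where a pivot deletes one tree edge and relaxes in one non-tree edge. The combinatorial heart of the argument is a planarity/non-crossing lemma: because all roots lie on a common face, the shortest paths emanating from consecutive roots interleave monotonically, so the \emph{total} number of pivots over the entire circuit is near-linear rather than quadratic. This is the only place planarity is genuinely used.

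To turn bounded pivots into a query structure, I would represent $T$ with a dynamic-tree structure (link-cut trees or top trees) in which each node carries its root-to-node distance implicitly via aggregated edge labels; then a pivot is a cut followed by a link in $\Order(\log N)$ time, and a root-to-node distance is read in $\Order(\log N)$ time. To make every intermediate root simultaneously queryable, I make this dynamic tree \emph{persistent}: each update spawns a new version, and the version reached after installing root $v_i$ is tagged by $v_i$. A query $(s,t)$ then locates the version tagged by the boundary node $t$ and reads the distance to $s$ in $\Order(\log N)$; a careful amortized analysis of the near-linear pivot sequence against the persistent balanced structure yields the claimed $\Order(N\log N)$ preprocessing time and space. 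The ``maximum'' variant is handled by replacing $\min$ with $\max$ in the relaxation, which is valid whenever the graph is acyclic (as the alignment graphs here are). Finally, when the graph has constant degree, the shortest $s$--$t$ path is recovered by walking parent pointers from $s$ up to the root $t$ in the tagged version, each hop costing $\Order(1)$, for total time linear in the path length.

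I expect the main obstacle to be the non-crossing lemma bounding the total number of pivots: proving that, summed over the whole rotation, each edge enters and leaves $T$ only a near-constant number of times (rather than $\Theta(N)$ times) requires a delicate monotonicity argument about how shortest paths from successive boundary roots may cross tree edges, and it is what keeps the preprocessing near-linear. The secondary technical burden is the persistence bookkeeping---ensuring that making the dynamic tree fully persistent adds only logarithmic overhead per update and still supports the parent-walk used for path extraction---but this is routine once the pivot bound is established.
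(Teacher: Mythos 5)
This statement is not proved in the paper at all: it is imported verbatim, with a citation, as Klein's multiple-source shortest-path (MSSP) theorem, and the paper's only contribution around it is to plug it into the trace-back construction of Lemma~\ref{lem:traceback}. The only meaningful comparison is therefore against Klein's own proof, and your sketch reconstructs essentially that proof: reverse the (directed, planar) graph so that the distinguished face carries the sources, compute one shortest-path tree, slide its root around the boundary face by pivots, bound the total number of pivots by a planarity/non-crossing argument, and make the dynamic-tree representation persistent so that every intermediate root stays queryable in $\Order(\log N)$ time. Two small sharpenings: Klein's pivot lemma is in fact that each dart enters (and hence leaves) the tree at most once over the whole rotation, so the total number of pivots is $\Order(N)$, not merely ``near-linear''; and your acyclicity caveat for the ``maximum'' variant is exactly right---longest-path queries are legitimate here only because the alignment graphs this paper feeds into the structure are DAGs.

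The one step that does not hold as written is the final path extraction. In a persistent link-cut or top-tree representation, following a parent pointer inside an old version is not an $\Order(1)$ operation: parent information is implicit in the auxiliary balanced structure, and locating the right version of a node generally costs $\Order(\log N)$, which would give $\Order(\log N)$ per hop rather than the claimed constant. The standard repair---and the reason the constant-degree claim in the statement is true---follows from the pivot lemma you already invoke: since each edge is a tree edge for a \emph{contiguous interval} of root positions, one can store that interval with the edge; at a vertex of constant degree, the parent edge in the version tagged by $t$ is then identified by inspecting $\Order(1)$ incident edges and testing interval membership, giving $\Order(1)$ per hop and total time linear in the path length. With that fix your argument is complete and matches the cited source.
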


The lookup table structure for a block is constructed as follows~(see~\cite{brubach2018succinct} for details).
Let $B$ be a $b\times b$ block to be preprocessed.
First, we regard $B$ as a grid-graph of size $b\times b$.
Namely, each node corresponds to a cell in the block, and each edge corresponds to an edit operation. Also, each edge is labeled by the weight of its corresponding edit operation.
Then, we construct the data structure of Theorem~\ref{thm:klein} for the grid-graph.
We denote this data structure by $\mathcal{D}_B$.
Next, for each input cell $u$ and each output cell $v$, we compute the weight of the shortest path from $u$ to $v$, and store them to table $M_B$ of size $(2b-1) \times (2b-1)$.
Each row~(resp. column) of $M_B$ corresponds to each output~(resp. input) cell of $B$.
A table is called {\em monotone} if each row's minimum value occurs in a column which is equal to or greater than the column of the previous row's minimum. It is {\em totally monotone} if the same property is true for every sub-table defined by an arbitrary subset of the rows and columns of the given table. It is known that we can construct $M_B$ so that it is totally monotone~\cite{crochemore2003subquadratic}. 
We thus construct $\mathcal{D}_B$ and $M_B$ in $\Order(b^2\log b)$ time and space.

By Theorem~\ref{thm:brubach}, the size of the final data structure (that depends on the size of $M_B$) is $\Order(b^2)$. However, $\Order(b^2\log b)$ working space is used for constructing $\mathcal{D}_B$.
In our algorithm, we also use table $M_B$ and keep the temporary data structure $\mathcal{D}_B$ to support tracing back operations efficiently.

\subsection{Answering Queries and Tracing Back}
Given a query input row and column, 
we can compute the output row and column in $\Order(b)$ time
using the SMAWK algorithm~\cite{aggarwal1987geometric} for finding the minimum value in each row of an implicitly-defined totally monotone table,
since $M_B$ is totally monotone~\cite{brubach2018succinct}.
Note that, for each output cell $v$ of $B$,
we can also obtain an input cell $s_v$
which is the starting cell of a shortest path ending at $v$
from the result of SMAWK algorithm.
Thus, by using data structure $\mathcal{D}_B$,
we can obtain a shortest $s_v$-$v$ path in time linear in the length of the path.
To summarize, we obtain the following lemma.
\begin{lemma} \label{lem:traceback}
  Given two strings both of length $b$ corresponding to a $b\times b$ block,
  we can construct a data structure of size $\Order(b^2\log b)$ in $\Order(b^2\log b)$ time
  such that given any values for the input row and column of the block,
  the data structure can compute the output row and column of the block in $\Order(b)$ time.
  Furthermore, given an output cell $v$ and any other cell $u$ in the block, we can compute a shortest $u$-$v$ path in time linear in the length of the path.
\end{lemma}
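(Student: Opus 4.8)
The plan is to observe that Lemma~\ref{lem:traceback} only strengthens Theorem~\ref{thm:brubach} by additionally retaining the auxiliary structure $\mathcal{D}_B$, so the proof amounts to assembling the two objects already built above — Klein's shortest-path data structure $\mathcal{D}_B$ on the grid-graph of $B$ and the totally monotone table $M_B$ of input-to-output path weights — and then reading off a concrete path from $\mathcal{D}_B$. The first thing I would pin down are the two structural facts about the grid-graph that make Theorem~\ref{thm:klein} applicable: it has $N=b^2$ nodes, each incident to at most three outgoing and three incoming edges (one per edit operation), so it is a constant-degree planar graph; and every output cell lies on the boundary of its infinite face, hence is an admissible target $t$ for Klein's path queries.

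Next I would settle the resource bounds and the query. Since $N=\Order(b^2)$, Theorem~\ref{thm:klein} builds $\mathcal{D}_B$ in $\Order(N\log N)=\Order(b^2\log b)$ time and space; the table $M_B$ has $\Order(b^2)$ entries and is totally monotone by~\cite{crochemore2003subquadratic}, and its entries can be filled within the same budget, so retaining both structures yields size and construction time $\Order(b^2\log b)$, dominated by $\mathcal{D}_B$. For queries I would reuse the analysis of~\cite{brubach2018succinct} verbatim: given values on the input row and column, each output-cell value is the minimum over input cells $u$ of (input value at $u$) $+$ (weight of a shortest $u$-to-$v$ path), i.e.\ the row-minimum of an implicitly defined totally monotone matrix, and SMAWK~\cite{aggarwal1987geometric} returns all $2b-1$ row minima in $\Order(b)$ time. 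Crucially, SMAWK also returns, for each output cell $v$, a minimizing input cell $s_v$.

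The genuinely new ingredient is trace-back, and here I would invoke the second half of Theorem~\ref{thm:klein} directly. Given any output cell $v$ and any cell $u$, since $v$ sits on the boundary of the infinite face and the grid-graph has constant degree, $\mathcal{D}_B$ reports an actual shortest $u$-$v$ path in time linear in its length. Instantiating $u=s_v$ gives a path whose weight equals the input value at $s_v$ plus the $s_v$-to-$v$ distance, which is exactly the output value computed by SMAWK, so the reported path realizes the optimal alignment through the block; this establishes the furthermore-clause of the lemma.

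The main obstacle I anticipate is not the assembly but verifying that the hypotheses of Theorem~\ref{thm:klein} hold in precisely the form needed — that output cells genuinely lie on the infinite-face boundary, that the degree is $\Order(1)$ so that path reporting is linear in length rather than paying an $\Order(\log N)$ distance query per edge, and that the minimizer $s_v$ obtained from the totally monotone $M_B$ is consistent with a shortest path in $\mathcal{D}_B$. This last point should turn out benign: any $s_v$-to-$v$ path returned by $\mathcal{D}_B$ has weight equal to the $s_v$-to-$v$ distance regardless of tie-breaking, so it necessarily attains the same output value, and no further coordination between $M_B$ and $\mathcal{D}_B$ is required.
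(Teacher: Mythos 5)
Your proposal is correct and follows essentially the same route as the paper: keep Klein's structure $\mathcal{D}_B$ on the grid-graph alongside the totally monotone table $M_B$, answer boundary-to-boundary queries with SMAWK (which also yields the minimizing input cell $s_v$), and then report the actual shortest $s_v$-$v$ path via Klein's constant-degree path-reporting guarantee. The only difference is cosmetic: you explicitly verify the hypotheses of Theorem~\ref{thm:klein} (planarity, constant degree, output cells on the infinite face), which the paper leaves implicit.
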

This data structure works under edit distance with arbitrary weights at no extra cost.

\section{Sensitive Patterns of Arbitrary Lengths} \label{sec:AETFS}
In this section we propose a data structure with which we can solve the {\AETFS} problem in time $\Order(n^3)$.
First, let us consider whether {\ETFSDP} can be applied directly to the {\AETFS} problem.
The {\AETFS} problem is a generalization of the {\ETFS} problem, and there are some differences between them: 
if there exists a \emph{long sensitive pattern} of length longer than $k$, then we cannot apply the same logic for the possibly mergeable rows to the {\AETFS} problem.
This is because merging multiple non-sensitive patterns of length $k$ may create a long sensitive pattern, while this sensitive pattern must be hidden.
In contrast, if there exists a \emph{short sensitive pattern} of length less than $k$, then we cannot apply the same logic for the gadget rows to the {\AETFS} problem, since this may introduce a short sensitive pattern in a gadget. Thus {\AETFS} is much more challenging.

Let $L=\Order(n^2)$ denote the total length of long sensitive patterns. As a first step towards our main result, we prove the following lemma.
\begin{lemma}\label{lem:AETFS}
  The {\AETFS} problem can be solved in $\Order(k|\mathcal{I}|n + Ln)$ time.
\end{lemma}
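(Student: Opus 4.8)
The plan is to reduce \AETFS{} to a \emph{constrained} version of the \ETFSDP{} recurrence, isolating the two genuinely new phenomena caused by arbitrary-length patterns, and then to organize the resulting computation as a DAG of small edit-distance tables, each aligned against (a prefix of) $W$. First I would pin down exactly how arbitrary lengths constrain a feasible output. A \emph{short} sensitive pattern (length $\le k$) cannot occur inside a merged block: every length-$k$ window of a merged block is a non-sensitive pattern $N_h$, and any occurrence of a length-$\le k$ substring inside the block is contained in some such window, hence cannot be sensitive. Thus short patterns constrain only the gadget strings. A \emph{long} sensitive pattern $W[i\dd j]$ of length $\ell>k$, on the other hand, forces a merge constraint: by minimality every length-$k$ window $W[q\dd q+k-1]$ with $i\le q\le j-k+1$ is non-sensitive, so these windows form a run $N_c,\ldots,N_{c+r-1}$ of \emph{consecutive} non-sensitive patterns (with $r=\ell-k+1$ and $M[c+1]=\cdots=M[c+r-1]=1$) whose full merge exactly reproduces $W[i\dd j]$. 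Condition \textbf{C1} for this pattern is therefore equivalent to the purely combinatorial requirement that \emph{no merged block of the output entirely contains this run}, i.e.\ each such \emph{forbidden run} must contain at least one break (a gadget boundary).

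With this characterization, the gadgets are handled by modifying the gadget-row recurrence of \ETFSDP{}. The special letters $\ominus,\oplus,\otimes$ should now encode gadget strings over $\Sigma\cup\{\#\}$ whose maximal $\#$-free runs have length $<k$ \emph{and} contain no short sensitive pattern; I would enforce the latter by threading an automaton for the short sensitive patterns through the ``delete or extend gadget'' transitions, so that a $\Sigma$-letter may be appended only when it does not complete a forbidden short pattern. This keeps the cost of each gadget cell $\Order(k)$ after $\Order(n)$ preprocessing, so the gadget rows together with the ordinary and possibly-mergeable rows (each $\Order(1)$ per cell) still cost $\Order(k|\mathcal{I}|\,n)$ in total, which is the first term of the bound. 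Trace-back is inherited verbatim from \ETFSDP{} via backward pointers.

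The core of the proof is enforcing the merge constraint. In \ETFSDP{} a merge at a possibly-mergeable row is handled ``for free'' by propagating from $k+1$ rows above, because when no long pattern is present full merging is always admissible; the new difficulty is that the plain recurrence does not know \emph{where the current block started}, and naively tracking the block start over all $\Order(|\mathcal{I}|)$ possibilities costs $\Order(|\mathcal{I}|^2 n)=\Order(n^3)$ with no dependence on $L$. To avoid this I would observe that, for a block ending at pattern $N_b$, the admissible starts form a \emph{contiguous} range $[g(b),b]$, where $g(b)=1+\max\{\,c_s : \text{forbidden run }[c_s,c_s+r_s-1]\text{ satisfies }c_s+r_s-1\le b\,\}$, and that $g$ is non-decreasing in $b$. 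Hence the block start only needs to be tracked \emph{while inside an active forbidden run}; once a block is long enough that no active run can still be completed, its start can be ``forgotten'' and the unconstrained propagation of \ETFSDP{} resumes. I would realize this by building a DAG whose nodes carry small edit-distance tables: the backbone is the \ETFSDP{} chain, and for each long sensitive pattern we splice in auxiliary states (one per position of its run) that record whether a break has already been taken inside that run, routing the propagation so that completing the run without a break is impossible. The DAG is acyclic because the output, and hence the alignment, is built strictly left to right. Processing each node's table across the $n$ columns of $W$, with boundary rows/columns threaded along the DAG edges, yields the optimal alignment, and the string is recovered by backward pointers exactly as before.

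The time bound then follows from a size argument for the DAG: the backbone has total height $\Order(k|\mathcal{I}|)$, and the auxiliary states contributed by a long pattern of length $\ell_s$ number $\Order(\ell_s)$, so over all long patterns they add $\Order\!\big(\sum_s \ell_s\big)=\Order(L)$ rows; each row is processed over $n$ columns, giving $\Order(k|\mathcal{I}|\,n+Ln)$ in total. The step I expect to be the main obstacle is precisely this last accounting together with its correctness: showing that the ``at least one break per forbidden run'' constraint, enforced locally by the auxiliary states, \emph{exactly} captures \textbf{C1} for all long patterns \emph{simultaneously} — including the case of overlapping or nested forbidden runs — while keeping the per-pattern overhead $\Order(\ell_s\cdot n)$ rather than $\Order(|\mathcal{I}|\cdot n)$. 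Getting the DAG to faithfully represent all and only the valid merge combinations with total height $\Order(k|\mathcal{I}|+L)$, and correctly threading boundary values across its branchings, is where the real work lies; the underlying edit-distance recurrences and the trace-back are then routine adaptations of \ETFSDP{}.
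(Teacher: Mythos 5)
Your proposal follows the same architecture as the paper's proof: short patterns constrain only the gadgets (enforced, as in the paper, by bounding how far a gadget may extend back into $W$, the paper's array $F$), long patterns are handled by a DAG of small DP tables whose extra rows track merge chains, and the accounting $\Order(k|\mathcal{I}|n)$ for the backbone plus $\Order(Ln)$ for the extra rows is the paper's accounting. However, there is a genuine gap in the step everything else rests on: your claimed equivalence between \textbf{C1} and ``no merged block entirely contains the (positional) run of a long sensitive pattern'' is false. Merging non-sensitive patterns taken from \emph{different places} in $W$ can spell a sensitive string even though those patterns do not form the run of any occurrence. Concretely, take $k=2$, $W=\texttt{abcdabzzbcw}$, and $\mathcal{S}=\{[0,2],[5,6],[6,7],[7,8]\}$, i.e.\ the strings \texttt{abc}, \texttt{bz}, \texttt{zz}, \texttt{zb}; closure and minimality hold. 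The non-sensitive patterns are \texttt{ab}, \texttt{bc}, \texttt{cd}, \texttt{da}, \texttt{ab}, \texttt{bc}, \texttt{cw}, starting at positions $0,1,2,3,4,8,9$ respectively. The patterns \texttt{ab} at position $4$ and \texttt{bc} at position $8$ are consecutive in the t-order and have a suffix-prefix overlap of length $k-1=1$, so \textbf{P1} permits merging them; the merge spells \texttt{abc} and violates \textbf{C1}; yet your constraint does not forbid it, because this pair is not the run of the occurrence $[0,2]$. Since your auxiliary states, your function $g(b)$, and your $\Order(L)$ row count are all attached to occurrences of long sensitive patterns, the algorithm as you specify it can output infeasible strings.

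This is exactly where the paper's construction does work that your sketch omits: its pruning and merging rules (Rules~2 and~4) are phrased in terms of a node corresponding to the length-$k$ prefix or suffix of a sensitive pattern \emph{as a string}, not as a position --- a merge chain must be kept separate whenever its first pattern equals the length-$k$ prefix of some sensitive pattern, and the edge that would complete the matching suffix is pruned, wherever those patterns occur in $W$. Repairing your construction accordingly is not cosmetic: once forbidden runs are defined string-wise, the claim that only $\Order(L)$ auxiliary rows suffice must be re-argued (many chains may spell prefixes of the same sensitive pattern), and Rule~4 of the paper is engineered precisely so that merging ``safe'' chains does not interfere with the string-based pruning of Rule~2. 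This correctness-versus-size tension is the part you yourself defer as ``the main obstacle,'' so what is missing is not a detail but the key lemma that makes the DAG simultaneously faithful to \textbf{C1} and of size $\Order(k|\mathcal{I}|n+Ln)$.
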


Note that Lemma~\ref{lem:AETFS} yields $\Order(n^2k)$ time for {\ETFS} because in this case $L = 0$. Lemma~\ref{lem:AETFS} thus generalizes Theorem~2 in~\cite{bernardini_et_al:LIPIcs:2020:12132}.
In what follows, we propose a new data structure for solving the {\AETFS} problem and prove Lemma~\ref{lem:AETFS}.
The main idea is to use multiple DP tables and link them under specific rules. Interestingly, our data structure is shaped as a DAG consisting of DP tables.

\subsection{Long Sensitive Patterns} \label{sec:longsensitive}
If there is a long sensitive pattern, we need to consider the case
where multiple non-sensitive patterns are contained in a single sensitive pattern. (Recall that all non-sensitive patterns have fixed length $k$.) In this case, we cannot apply the {\ETFSDP} algorithm from~\cite{bernardini_et_al:LIPIcs:2020:12132} directly.

Let us consider the situation in which we have just finished computing a possibly mergeable row. We may be able to choose the next move from two candidates: either go down to the next (gadget) row
or jump to the next possibly mergeable row if possible.
We consider a \emph{decision tree} $\mathcal{T}$ that represents all combinations of such choices at all possibly mergeable rows~(inspect Figure~\ref{fig:pruning}). We regard $\mathcal{T}$ as a \emph{tree of tables}, i.e., each node of $\mathcal{T}$ represents a small DP table. Let $E[0\dd  (k+1)|\mathcal{I}|][0\dd  n]$ be the DP table of the {\ETFSDP} algorithm described in Section~\ref{sec:bg}. There are three types of nodes in $\mathcal{T}$:
\emph{root}, \emph{\#-node}, and \emph{m-node}.
The root represents sub-table $E[0\dd  k][0\dd n]$.
For each depth $b$ with $1 \le b \le |\mathcal{I}|-1$,
the \#-node at depth $b$ represents sub-table $E[b(k+1)\dd  (b+1)(k+1)-1][0\dd n]$, and each m-node at depth $b$ represents a copy of possibly mergeable row $E[(b+1)(k+1)-1][0\dd n]$.
Each edge~$(u, v)$ of $\mathcal{T}$ means that the bottom row values of $u$ will be propagated to the top row of $v$. If there are multiple incoming edges $(u_1, v), (u_2, v), \ldots, (u_p, v)$ of a single node $v$, then we virtually consider a row $r[0\dd n]$ as the previous possibly mergeable row of $v$ such that $r[j]$ is the minimum value between all $j$-th values in the last rows of $u_1, u_2, \ldots, u_p$ for each $0 \le j \le n$. We call a path that consists of only m-nodes an \emph{m-path}.

\begin{figure}[!t]
\centering
\includegraphics[width=\linewidth]{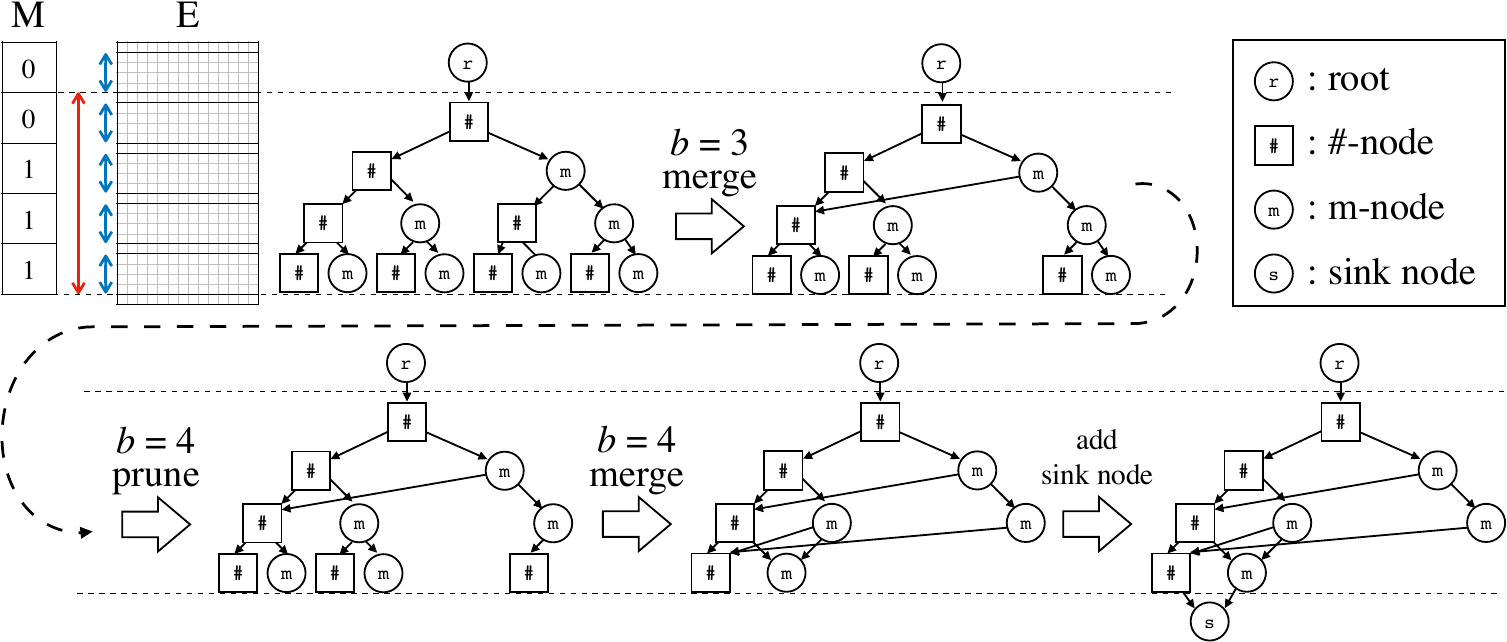}
\caption{
  An example for pruning and merging a decision tree.
  The red arrow represents a long sensitive pattern, and blue arrows represent non-sensitive patterns.
  The tree on the top-left represents the decision tree after pruning or merging operations for depths $b = 0, 1, 2$.
  For $b = 3$, we merge two \#-nodes by Rule 3.
  For $b = 4$, we prune an m-node by Rule 2 since merging four non-sensitive patterns following the node results creating a sensitive pattern.
  Furthermore, we merge three \#-nodes by Rule 3, and merge two m-nodes by Rule 4.
  Finally, we add the sink node at the bottom.
}
\label{fig:pruning}
\end{figure}

We can solve the {\AETFS} problem if we can simulate all valid combinations of merge operations represented by $\mathcal{T}$.
However, we do not have to check all combinations explicitly.
This is due to the fact that we can \emph{prune} branches and \emph{merge} nodes in the decision tree $\mathcal{T}$ as follows\footnote{Once the merge operation is applied to the decision tree, it is no longer a tree. However, we continue calling it the decision ``tree'' for convenience.
}.

For incremental $b = 1, 2, \ldots, |\mathcal{I}|-1$,
we edit $\mathcal{T}$ according to the following four rules:
\begin{description}
  \item[Rule 1.] If $M[b] = 0$, then prune all edges to m-nodes at depth $b$.
  \item[Rule 2.] If there is a path $(v_1, v_2, \ldots, v_p)$ such that
    the depth of $v_p$ is $b$, then
    $(v_2, \ldots, v_p)$ is an m-path, and
    $v_1$ and $v_p$ respectively correspond to the length-$k$ prefix and the length-$k$ suffix of the same  sensitive pattern. Hence prune the edge~$(v_{p-1}, v_p)$.
    In other words, we prune the edge $(v_{p-1}, v_p)$ if merging the path strings results in creating a long sensitive pattern.
  \item[Rule 3.] If there are multiple \#-nodes at depth $b$, then merge all of them into a single \#-node.
  \item[Rule 4.]
    If there are multiple m-nodes $\{v_1, v_2, \ldots\}$ at depth $b$ such that
    each $u_i$ does not correspond to the length-$k$ prefix of any sensitive pattern,
    where $u_i$ is the parent of the starting node of the longest m-path ending at $v_i$,
    then merge such m-nodes into a single m-node.
\end{description}
Finally, we add the \emph{sink node} under the decision tree such that
the sink node corresponds to the bottom row $E[|\mathcal{I}|(k+1)][0\dd  n]$, and
each node at depth $|\mathcal{I}|-1$ has only one outgoing edge to the sink node. We also rename the root to the \emph{source node}.

After executing all pruning and merging operations, the decision tree becomes a DAG
whose all source-to-sink paths represent all valid choices~(inspect Figure~\ref{fig:decision} in Appendix~\ref{sec:omitted}).

We call such DAG the \emph{decision DAG}, and we denote it by $\mathcal{G}$.
Although the size of $\mathcal{T}$ can be exponentially large, we can directly construct $\mathcal{G}$ in a top-down fashion
from an instance of {\AETFS} in $\Order(|\mathcal{G}|)$ time.
\paragraph*{Correctness}
We show that no valid path is eliminated and
all invalid paths are eliminated
while constructing $\mathcal{G}$ from $\mathcal{T}$.
Clearly, crossing \#-nodes creates no invalid path.
In what follows, we mainly focus on m-nodes that can create invalid paths.

It is easy to see that
a path $(v_1, v_2, \ldots, v_p)$ is invalid
if and only if
(i) there is a sub-path $(v_s, \ldots, v_t)$ where $v_s$ and $v_t$ respectively correspond to the length-$k$ prefix and suffix of the same sensitive pattern, and $v_{s+1}, \cdots, v_t$ are all m-nodes or
(ii) there is an edge~$(v_{i-1}, v_i)$ such that
$M[d_i] = 0$ where $d_i$ is the depth of $v_i$.
By Rule 1, we delete all invalid paths which satisfy condition (ii),
and do not delete any valid path.
By Rule 2, we delete an invalid path which satisfies condition (i),
and do not delete any valid path.
By Rule 3, we merge \#-nodes, however,
it does not matter since this operation does not cause
deleting or creating any path.
By Rule 4, we may merge m-nodes, however,
the m-nodes to be merged are carefully chosen
to not interfere with Rule 2.
Thus, this also does not cause
deleting or creating any path.
Therefore, $\mathcal{G}$ is constructed correctly.
\paragraph*{The Size of the DAG}
We next analyze the size of $\mathcal{G}$.
Clearly, the total number of \#-nodes is equal to $|\mathcal{I}|-1$.
Also, the source node and the sink node are unique.
The number of m-nodes, each of which is a child of some \#-node, is equal to the number of \#-nodes, i.e., $|\mathcal{I}|-1$.
The number of the rest of m-nodes is at most $L$.
Also, each node has at most two outgoing edges.

Each \#-node and the source node represent a sub-table of size $(k+1)\times (n+1)$. Each m-node represents a possibly mergeable row, and the sink node represents the last gadget row.
Therefore, the total size of $\mathcal{G}$ is $\Order(|\mathcal{I}|kn + |\mathcal{I}|n+Ln) = \Order(k|\mathcal{I}|n + Ln)$.
\paragraph*{Time Complexity}
The decision DAG $\mathcal{G}$ is computed in $\Order(k|\mathcal{I}| + L)$ time without creating the original decision tree $\mathcal{T}$
by applying the above four rules for incremental $b = 1, 2,\ldots, |\mathcal{I}|-1$. Also, we can compute each cell in $\mathcal{G}$ in amortized constant time~\cite{bernardini_et_al:LIPIcs:2020:12132}. Thus, the total time is $\Order(k|\mathcal{I}|n + Ln)$.

\subsection{Short Sensitive Patterns}
Running the {\ETFSDP} algorithm may introduce short sensitive patterns in its gadgets. We explain how to modify the recursive formulae of the gadget row to account for short sensitive patterns.
We first prove that w.l.o.g.~all gadgets are either a single $\#$ or can be optimally aligned such that:
\begin{enumerate}
    \item All \#'s in gadgets are substituted by letters in $W$;
    \item All letters in gadgets are matched with letters in $W$; and
    \item No further letters are inserted between letters of the same gadget.
\end{enumerate}

If some extra inserted letters of $W$ are aligned with a gadget, we can add some extra \#'s to change them into substitutions without increasing the cost, changing the number of non-sensitive patterns or increasing the number of sensitive patterns. Similarly, if some letters of the gadget are not matched with the same letters in $W$, these gadget letters can be replaced by \#. Finally, if some \#'s in the gadget are not aligned with any letter in $W$, we can either remove them or move them to the place of an adjacent gadget letter while deleting that letter. Inspect the following example.
\begin{example}
Let the following optimal alignment from Example~\ref{ex:ETFS} with cost $4$. Gadgets are in red.

\noindent\texttt{e 	c 	a 	b - a 	a 	a 	a 	\textcolor{blue}{a} 	b 	b 	-  	b 	a 	d 	f}\\
\texttt{e 	c 	a 	b 	\textcolor{red}{\# 	a 	a 	\#} 	a 	\textcolor{blue}{b} 	b 	b 	\textcolor{red}{\#} 	b 	a 	d 	f}

We transform it to another optimal alignment of the same cost that respects the above conditions:

\noindent\texttt{e 	c 	a 	b  a 	a 	a 	a 	\textcolor{blue}{a} 	b 	b 	-  	b 	a 	d 	f}\\
\texttt{e 	c 	a 	b 	\textcolor{red}{\# 	a \#} 	a 	\textcolor{blue}{b} 	b 	b 	\textcolor{red}{\#} 	b 	a 	d 	f}

\noindent Let us first consider the leftmost gadget: (1) \#'s are substituted by letters in $W$; (2) all letters are matched with letters in $W$; and (3) no further letters are inserted between the gadget's letters. Note that the rightmost gadget is a single $\#$ and so the modified alignment satisfies all conditions above.
\end{example}
%

A single \# cannot introduce any sensitive pattern, so just as in the {\ETFSDP} algorithm we can get a cost of $E[i-1][j] + 1$ corresponding to the case that a single \# is inserted after $W[j-1]$ or a cost of $E[i-1][j-1] + 1$ corresponding to the case that a single \# is aligned with $W[j-1]$. For longer gadgets the possibilities are a bit more restricted than in the {\ETFSDP} algorithm. Assuming the gadget to have the structure described above, it follows that the previous \# cannot be aligned before $W[F[j-1]]$, where $F[j]$ is defined to be the largest integer such that $W[F[j]\dd j-1]$ contains a sensitive or non-sensitive pattern (if it exists). More formally:
$$F[j] = \max(\{i < j \mid W[i\dd j-1] \text{ contains a sensitive pattern}\} \cup \{j-k\} \cup \{0\}).$$
$F$ can be computed in $\Order(kn)$ time. We denote the point-wise minimum of the copies of the preceding merge row with $r$; in the case of {\ETFS} this is just the previous merge row. This gives us the following formula for the gadget rows. For all $0 \le i \le (k+1)|\mathcal{I}|$ with $i \equiv 0 \mod k+1$,
\begin{eqnarray*}
\label{eq:gadgetrow}
E[i][j] &=& \min\left. \begin{cases}
r[j]+1 \\
r[j-1]+1 \\
E[i][j-1]+1,E[i][j-2]+1,\ldots, E[i][F[j-1]+1]+1.\\
\end{cases}\right.
\end{eqnarray*}
(Notice that a string position and its corresponding table index differ by one.)

To conclude, we also need to consider the range in which the edit distance value lies. Since the last row corresponds to $\otimes$, the value stored in $E[|\mathcal{I}|(k + 1)][j]$, for all $0 \le j \le n$, is the cost of an optimal alignment between $W[0\dd j+e_j-1]$ and a string in the regular expression whose length-$(e_j+1)$ suffix is $\#W[j\dd j+e_j-1]$, where $e_j = \min(\max\{e \mid W[j\dd j+e-1] \text{ does not contain any sensitive or non-sensitive pattern}\}\cup\{n-j\})$.
The edit distance between $W$ and any string matching the regular expression is found among the rightmost $n-F[n]$ entries of the last row or the rightmost entry of the second-last row. Thus, we obtain:
\begin{eqnarray*}
  d_\mathrm{E}(Y_{\mathrm{ED}}, W) &=& \min\left. \begin{cases} 
  E[|\mathcal{I}|(k+1)-1][n], \\
  E[|\mathcal{I}|(k+1)][n], E[|\mathcal{I}|(k+1)][n-1], \ldots, E[|\mathcal{I}|(k+1)][F[n]+1].
  \end{cases}\right.
\end{eqnarray*}


For each $E[i][j]$ and $r[j]$ we store a pointer to an entry which led to this minimum value. We can then trace back as in {\ETFSDP}, taking the minimizing entry of the above equation as a starting point, and obtain $Y_{\mathrm{ED}}$ in an additional $\Order(kn)$ time. Therefore the total time complexity of {\AETFS} is $\Order(k|\mathcal{I}|n+Ln)$ and we arrive at Lemma~\ref{lem:AETFS}. 

\section{$\tilde{\Order}(n^2)$-Time Algorithms using Dyadic Intervals} \label{sec:nsquared}
In this section we improve {\ETFSDP} and the algorithm of Lemma~\ref{lem:AETFS}.
We first show an algorithm to compute {\em gadget rows} in amortized constant time per cell in the rows. Secondly, we focus on the redundancy in the computation of {\em ordinary rows}, and propose an algorithm to compute them by using the lookup table structure of Section~\ref{sec:tab} according to dyadic intervals. These two improvements yield an $\TOrder(n^2)$-time algorithm for {\ETFS}. Finally, we employ a similar lookup table technique to contract m-paths in the decision DAG, which yields an $\TOrder(n^2)$-time algorithm for {\AETFS} as well.
\subsection{Speeding Up Gadget Rows Computation}\label{subsec:gadgetspeedup}
First, we show how to speedup the gadget rows computation.
For each \#-node $u$ in the decision DAG $\mathcal{G}$,
we denote by $d_u$ the in-degree of $u$.
Let $G_u[0\dd  n]$ be the gadget row in $u$.
For each $0 \le i \le d_u-1$, let $M_u^i[0\dd  n]$ be a possibly mergeable row of a node
which has an edge pointing to $u$.
The recursive formula for $G_u[i]$ is as follows:
$G_u[0] = \min\{M_u^0[0] + 1, \ldots, M_u^{d-1}[0] + 1\}$, and
\begin{equation*}
G_u[j] = \min\left.
\begin{cases}
  M_u^0[j-1] + 1, \ldots, M_u^{d-1}[j-1] + 1,\\
  M_u^0[j] + 1, \ldots, M_u^{d-1}[j] + 1,\\
  G_u[j-1] + 1, \ldots, G_u[F[j-1]+1] + 1,\\
\end{cases}\right.
\end{equation*}
for $1 \le j \le n$.
We assume that $M_u^0, \ldots, M_u^{d-1}$ and $F$ are given.
It costs $\Order(n(k+d_u))$ time to compute $G_u$  na\"{\i}vely. The next lemma states that we can actually compute $G_u$ in $\Order(nd_u)$ time.
\begin{lemma} \label{lem:gadgetrow}
Given $M_u^0, \ldots, M_u^{d-1}$ and $F$, we can compute every $G_u[i]$ in $\Order(d_u)$ time for incremental $i = 0, \ldots, n$.
\end{lemma}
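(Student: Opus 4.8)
The lemma asks to compute the gadget row $G_u$ with in-degree $d_u$ in $O(d_u)$ amortized time per cell, total $O(n d_u)$, rather than the naive $O(n(k+d_u))$. The naive cost has two sources: the $d_u$ incoming merge rows $M_u^i$ contribute $d_u$ candidate values per cell, and the "extend gadget" term ranges over $G_u[j-1], \ldots, G_u[F[j-1]+1]$, a window of up to $k$ entries. The $O(d_u)$ part from the merge rows is unavoidable but already within budget; the real work is eliminating the $O(k)$ factor from the sliding-window minimum over previously-computed $G_u$ entries.

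**Plan.**

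First I would separate the recurrence into two pieces. Define $P[j] = \min_{0 \le i \le d_u - 1}\{M_u^i[j-1]+1,\ M_u^i[j]+1\}$, the contribution from the incoming merge rows; this costs $O(d_u)$ per cell to evaluate directly, which is within budget. The remaining "gadget-extension" piece is $\min\{G_u[j-1]+1, \ldots, G_u[F[j-1]+1]+1\}$, a minimum of $(G_u[\ell]+1)$ over the window $\ell \in [F[j-1]+1,\, j-1]$. So $G_u[j] = \min(P[j],\ \min_{\ell \in [F[j-1]+1,\,j-1]} G_u[\ell]+1)$.

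**Handling the sliding-window minimum.**

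The key step is to compute the windowed minimum in amortized $O(1)$ time. Here I would use the standard monotone-deque technique for sliding-window minima. As $j$ increases, note that $F[j-1]$ is nondecreasing in $j$ (since $F[j] = \max(\ldots \cup \{j-k\} \cup \{0\})$ and each defining term is monotone in $j$), so both the left endpoint $F[j-1]+1$ and the right endpoint $j-1$ of the window advance monotonically. I maintain a deque of indices $\ell$ into $G_u$ whose $G_u$-values form an increasing subsequence: when I finish computing $G_u[j-1]$, I push $j-1$ onto the back after popping all back entries with larger-or-equal value; at the front I pop all indices that have fallen below $F[j-1]+1$. The front of the deque then gives $\arg\min$ of $G_u$ over the current window, so the extension term is read in $O(1)$. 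Because the window boundaries move monotonically, each index is pushed and popped at most once, giving amortized $O(1)$ per cell for this piece. Adding the $O(d_u)$ cost of evaluating $P[j]$ gives $O(d_u)$ per cell and $O(n d_u)$ total.

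**Main obstacle and bookkeeping.**

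The main obstacle is establishing the monotonicity of the window endpoints rigorously, since the whole amortized bound rests on it: I must confirm $F$ is nondecreasing from its definition, and verify that the deque invariant is maintained correctly when the $P[j]$ contribution is interleaved with the gadget-extension contribution (in particular, the deque tracks only the $G_u$ self-references, while $P[j]$ is folded in separately at each cell). A second, lighter point is the base case $G_u[0] = \min_i\{M_u^i[0]+1\}$, handled directly in $O(d_u)$. Finally, for trace-back I would, when each $G_u[j]$ attains its minimum, store a back-pointer to whichever candidate achieved it (one of the $M_u^i$ entries or the deque-front index $\ell$); this adds only $O(1)$ per cell and does not affect the asymptotics, preserving the ability to reconstruct an optimal alignment as in \ETFSDP.
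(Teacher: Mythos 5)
Your proof is correct, but it takes a genuinely different route from the paper. The paper's proof exploits a structural property specific to this recurrence: letting $r_j$ be the rightmost minimum in the window $G[F[j-1]+1\dd j]$ with value $m_j$, every entry $G[p]$ with $r_j < p \le j$ equals exactly $m_j+1$ (since $G[p] > G[r_j]$ by rightmost-ness and $G[p] \le G[r_j]+1$ by the gadget-extension term). This ``flatness'' after the minimum means a single index $r_j$ suffices as the window summary: when $r_{j-1}$ slides out of the window, the remaining entries are all tied at $m_{j-1}+1$, so $r_j$ is recovered by comparing only $G[j-1]$ and $G[j]$, giving a worst-case $\Order(1)$ update (plus the unavoidable $\Order(d_u)$ merge-row scan). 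You instead treat the extension term as a generic sliding-window minimum and apply the monotone-deque technique, relying only on the monotonicity of $F$ (which you correctly derive from its definition, and which the paper also uses). Your approach needs no problem-specific structural observation and would work for any recurrence with monotonically advancing window endpoints, at the cost of heavier bookkeeping and an amortized rather than worst-case per-cell bound; since the lemma is applied only through the total $\Order(n d_u)$ bound, the amortization is harmless. Your handling of the decomposition into the merge-row term $P[j]$ and the self-referential window, and your note that the deque stores $G_u$ values regardless of which candidate produced them, are both sound.
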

\begin{proof}
Let us fix an arbitrary \#-node $u$ and omit subscripts related to $u$.
Let $r_j$ be the index such that $G[r_j]$ is the rightmost minimum value in the range $G[F[j-1]+1\dd  j]$,
and let $m_j = G[r_j]$ be that minimum value.
Then, it can be seen that $G[p] = m_j+1$, for any $r_j < p \le j$,
since $G[p] > G[r_j]$ and $G[p] \le G[r_j] + 1$ by the recursive formula.
Clearly, $r_0 = 0$.
We assume that $r_{j-1}$ is known before computing $G[j]$.
If $r_{j-1} < F[j-1]+1$, then $G[j-1] = m_{j-1} + 1$ is the minimum in $G[F[j-1]+1\dd j-1]$,
and $r_j = \arg\min\{G[j-1], G[j]\}$.
Otherwise, $F[r_{j-1}] = m_j$ is the minimum in $G[F[j-1]+1\dd j-1]$,
and $r_j = \arg\min\{G[r_{j-1}], G[j]\}$.
Note that $F$ is a non-decreasing array, i.e., $F[j] \ge F[j-1]$.
Thus, we can compute $G[j]$ and $r_j$ in $\Order(d)$ time.
\end{proof}
By Lemma~\ref{lem:gadgetrow}, we can compute all gadget rows in a total of $\Order(n\sum_{u\in \mathcal{G}}d_u) = \Order(n|\mathcal{I}| + nL)$ time.
\subsection{{\ETFS} in $\Order(n^2\log^2 k)$ Time} \label{subsec:etfsnsquared}
In this section we describe an algorithm which solves {\ETFS} in $\Order(n^2\log^2 k)$ time. The key to losing the factor $k$ is the fact that the string $T$ on the left is highly repetitive and only consists of substrings of the length-$n$ string $W$ (interleaved by some letters in $\Gamma$). Therefore we can compute the DP table efficiently using only few precomputed sub-tables as in the Four Russians method~\cite{arlazarov1970economical}.

First, we partition $W$ into substrings of length $2^i$ (or shorter if $2^i \nmid |W|$) for each $i \in \{0, 1, 2, \ldots, \lfloor \log k \rfloor\}$.
This gives a set $\mathcal{A}$ of at most $2n$ different strings. Moreover, note that each length-$k$ pattern in $W$ can be written as the concatenation of at most 2$\lfloor \log k \rfloor + 2$ such strings.

For every pair of strings in $(w_1, w_2) \in \mathcal{A}^2$ with $|w_1| = |w_2|$, we precompute the lookup table for the strings $w_1$ and $w_2$ according to Lemma~\ref{lem:traceback}. Now we can compute the non-merge case of each possibly mergeable row using at most $2\cdot \lceil n/2^i\rceil$ precomputed lookup tables of size $2^i \times 2^i$ (or smaller) for each $i \in \{0, 1, 2, \ldots, \lfloor \log k \rfloor\}$.

\paragraph*{Time Complexity} Precomputing a lookup table for two strings of length up to $2^i$ takes $\Order(2^{2i}i)$ time. In total this gives a precomputation time of 
$$\Order\left(\sum_{i = 0}^{\lfloor\log k\rfloor} \frac{n}{2^i} \cdot \frac{n}{2^i} \cdot 2^{2i}i \right) = \Order(n^2\log^2 k).$$
Each possibly mergeable row can now be computed in $\Order(n\log k)$ time from the previous merge and gadget row, since each non-sensitive length-$k$ pattern can be partitioned into at most $2\lfloor \log k \rfloor + 2$ precomputed strings. Gadget rows can be computed in $\Order(n)$ time each from the preceding possibly mergeable rows using the technique described by Lemma~\ref{lem:gadgetrow}.

For the traceback, note that $|X_{\mathrm{ED}}| = \Order(kn)$, i.e., the length of an optimal alignment path over $E$ is $\Order(kn)$. We do not know how the path behaves inside each block. However we can compute the sub-path inside a block in time linear in the path's length by using Lemma~\ref{lem:traceback}. The gadget rows can be traced back in a further $\Order(n)$ time.
Thus, we can trace back in a total time of $\Order(kn)$. 
Therefore the total time complexity is $\Order(n^2\log^2 k)$. We arrive at the following result.


\etfsatheorem*


\subsection{{\AETFS}~in $\Order(n^2\log^2 n)$ Time}
In this section we describe how to further reduce the decision DAG $\mathcal{G}$ from Section~\ref{sec:AETFS} by precomputing parallel m-paths.
First, we give some observations for m-paths.
An m-path is said to be \emph{maximal} if the m-path cannot be extended either forward or backward.
Any two maximal m-paths do not share any nodes, since every m-node in $\mathcal{G}$ has at most one outgoing edge to m-nodes and at most one incoming edge from m-nodes.
Also, the number of maximal m-paths is at most $|\mathcal{I}|$ since the parent of the first m-node of each maximal m-path is a different \#-node or the source node.
In what follows, suppose $\mathcal{G}$ contains a total of $p$ maximal m-paths of length $\ell_1,\ldots,\ell_p$ with $\sum_{i=1}^p \ell_i \le n + \ell n$, where $\ell$ is the length of the longest sensitive pattern. 
Recall that an m-node represents a possibly mergeable row of size $1\times(n+1)$, and thus, we will identify an m-path of length $x$ with a DP table of size $x\times(n+1)$.

Let us now describe our DAG reduction.
An example of the DAG reduction is demonstrated in Figure~\ref{fig:copy}.
In the prepocessing phase, we first construct a lookup table structure for all possible m-paths corresponding to dyadic intervals of lengths at most $\ell$ over the range $[1, |\mathcal{I}|-1]$ of the depths of m-paths, in a similar way as in Section~\ref{subsec:etfsnsquared}.
Next, for each maximal m-path in $\mathcal{G}$, we decompose it into shorter m-paths according to dyadic intervals.
We then contract each such m-path into a single node named \emph{j-node} consisting of a single row, which \emph{jumps} from the beginning of the m-path to the end and represent consecutive merges.
Also, we have to take into account edges leaving the m-path.
Note that paths that leave the m-path early always leave to a \#-node, so we do not have to worry about introducing any sensitive patterns.
We therefore create a new \emph{copy} of the m-path preceded by an additional node named \emph{c-node} consisting of a single row, which takes the point-wise minimum of the parent nodes of the m-paths.

\begin{figure}[t]
    \centering
    \includegraphics[width=\textwidth]{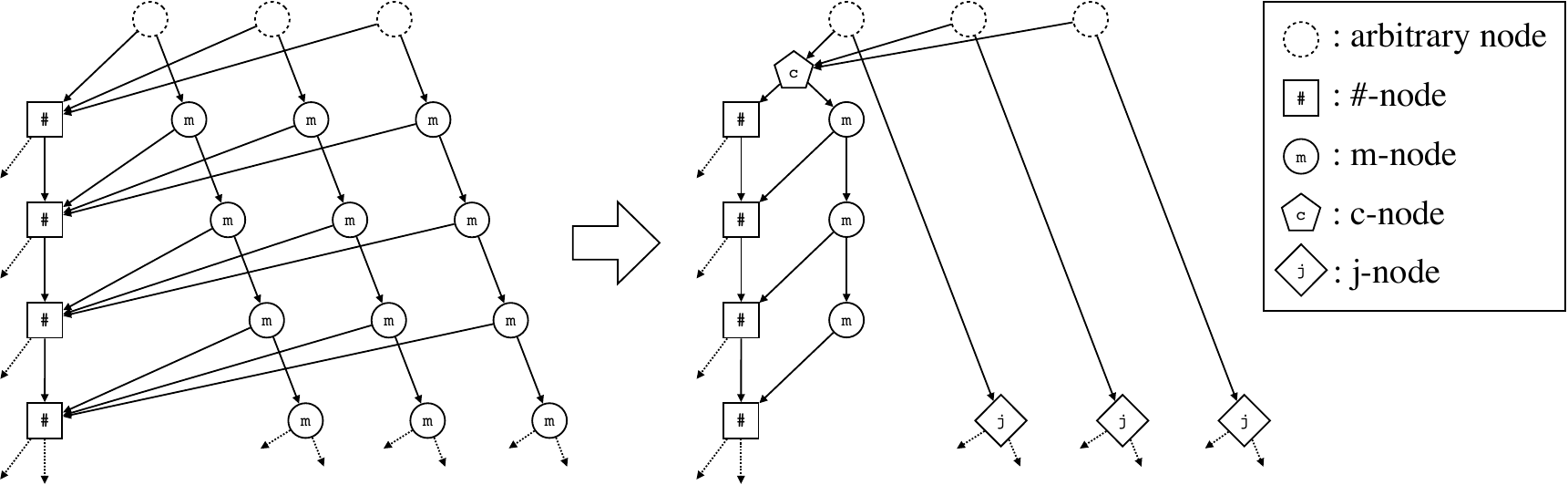}
    \caption{
      An example of contracting a part of the decision DAG.
      Before contracting, there are three parallel m-paths each of length $2^2 = 4$.
      We contract each m-path to a j-node corresponding to the case in which four consecutive non-sensitive patterns are merged.
      Also, we create an m-path of length 3 preceded by a c-node corresponding to the case in which at least one gadget is inserted.
    }
    \label{fig:copy}
\end{figure}

After finishing the DAG reduction, we fill the DP tables from top to bottom: all j-nodes are computed by using the lookup table; all new m-paths are computed in the original fashion, including all outgoing edges; and all the other nodes are computed as in Section~\ref{subsec:etfsnsquared}.
Also, we can trace back and find the solution to {\AETFS} by storing appropriate backward-pointers and the data structures of Lemma~\ref{lem:traceback} just as in Section~\ref{subsec:etfsnsquared}.

\paragraph*{Time Complexity}
Constructing the lookup tables takes $\Order(n^2\log^2 \ell)$ time since there are at most $\lfloor \log \ell\rfloor$ different path lengths, and for each $i\in\{0, 1, \ldots, \lfloor \log \ell\rfloor\}$, we preprocess at most $(n / 2^i)^2$ blocks of size $2^i \times 2^i$ each in $\Order(2^{2i}i)$ time.
We can also easily contract $\mathcal{G}$ in $\Order(n^2)$ time by traversing the DAG. Note that the number of nodes in the original DAG is $\Order(n^2)$~(Section~\ref{sec:longsensitive}).
We partition each path of length $\ell_i$ into at most $2(\log \ell + \ell_i/\ell)$ precomputed paths: at most $\ell_i / 2^{\lfloor \log \ell \rfloor}$ paths of length $2^{\lfloor \log \ell \rfloor}$ and at most 2 of each shorter length.
Therefore the j-nodes can be computed in $\Order(n \cdot \sum_{i=1}^p 2(\log \ell + \ell_i/\ell)) = \Order(n^2 \log \ell)$ time. The c-nodes and the following m-nodes can be computed in $\Order(n^2\log \ell)$ time, because there is at most one c- or m-node per depth and per precomputed path length. The \#-nodes can each be computed in $\Order(n\log^2 k)$ time using the method described in Section~\ref{subsec:etfsnsquared}.
Finally, tracing back takes only $\Order(kn)$ time by using the backward-pointers and the data structures of Lemma~\ref{lem:traceback}.

Summarizing this section, we have shown that the {\AETFS} problem can be solved in time $\Order(n^2\log^2 k + \min\{n^2\log^2\ell, Ln\})$. We arrive at the following result.

\aetfsatheorem*

We defer investigating the generalization of this result for arbitrary weights to the full version of this paper.

\paragraph*{Acknowledgments} We wish to thank Grigorios Loukides (King's College London) for useful discussions about improving the presentation of this manuscript.

\bibliography{bibliography.bib}

\clearpage
\appendix
\section{Omitted Figure} \label{sec:omitted}
\begin{figure}[!h]
\centering
\includegraphics[width=\linewidth]{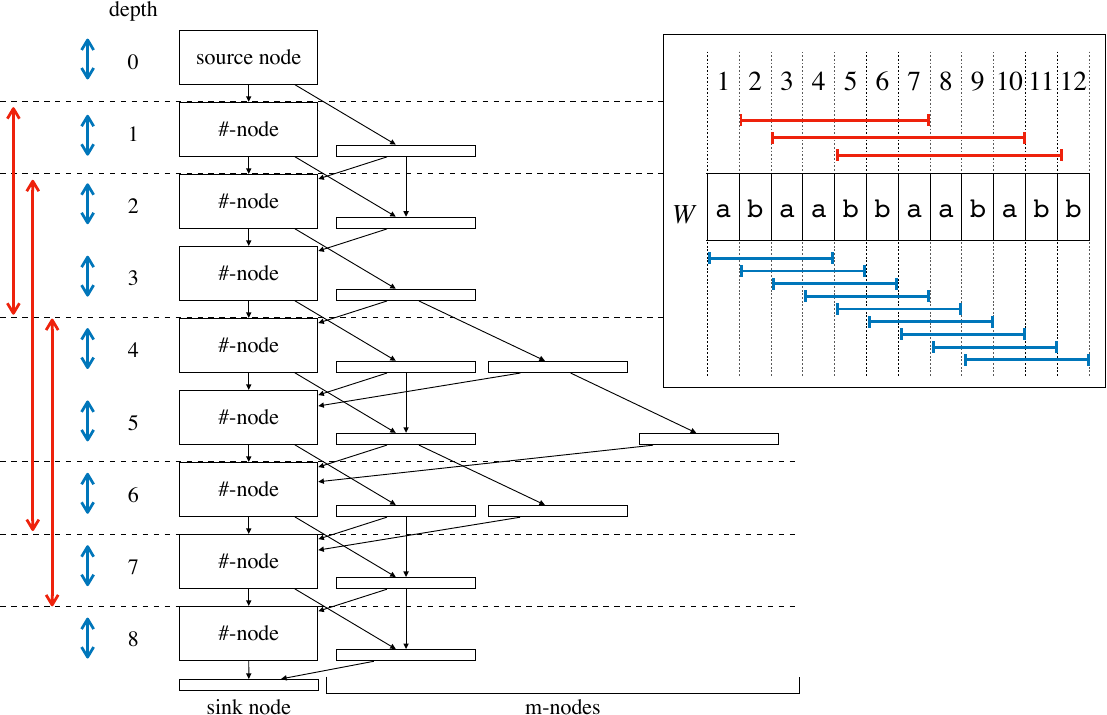}
\caption{
  The decision DAG for
  $W = \mathtt{abaabbaababb}$, $k = 4$, and $\mathcal{S} = \{[2,7], [3,10], [5,11]\}$.
}
\label{fig:decision}
\end{figure}
\end{document}